\theoremstyle{plain} 
\newtheorem{Pro}{Proposition}
\newtheorem{Lem}{Lemma}
\newtheorem{Cor}{Corollary}
\begin{document}

\title{Measurement-Adaptive Cellular Random Access Protocols
\thanks{This work was supported by Alcatel-Lucent Bell Labs, Stuttgart, Germany.}
}


\author{Anastasios Giovanidis         \and Qi Liao \and S\l awomir Sta\'nczak}


\institute{A. Giovanidis \at
              INRIA - TREC, 23 avenue d'Italie, CS 81321, 75214 Paris Cedex 13, France \\
             \email{anastasios.giovanidis@inria.fr}           
           \and
           Q. Liao \at
              Fraunhofer Institute for Telecommunications, Heinrich Hertz Institute (HHI), Einsteinufer 37, 10587 Berlin, Germany\\
               \email{qi.liao@hhi.fraunhofer.de}
              \and
              S. Sta\'nczak \at
              HHI and Heinrich-Hertz-Lehrstuhl f\"ur Informationstheorie und theoretische Informationstechnik, Technische Universit\"at Berlin, Einsteinufer 27, 10587 Berlin, Germany\\
              \email{slawomir.stanczak@hhi.fraunhofer.de}
}

\date{Received: date / Accepted: date}

\maketitle

\begin{abstract}
This work considers a single-cell random access channel (RACH) in cellular wireless networks. Communications over RACH take place when users try to connect to a base station during a handover or 
when establishing a new connection. We approach the problem of optimal coordination of user actions, 
taking into account a dynamic environment (channel fading, mobility, etc.). Within the framework of Self-Organizing Networks (SONs), the system should 
self-adapt to such environments without human intervention. To do so certain information should be gathered at the base station.
For the performance improvement of the RACH procedure, we propose protocols which exploit information from measurements and user reports in order to estimate current values of the system unknowns and broadcast global action-related values to all users. The protocols suggest an optimal pair of user actions (transmission power and back-off probability) found by minimizing the drift of a certain function. Numerical results illustrate the great performance benefits at a very low or even zero cost in power expenditure and delay, as well as the fast adaptability of the protocols to envoronment changes.

\keywords{Random Access Channel \and Self-Organizing Network (SON) \and Measurements \and Collision Resolution \and Drift Minimization \and Power Control}
\end{abstract}

\section{Introduction}
\label{intro}

Random multiple access schemes have traditionally played an important role in wireless communication systems. 
Their use has been established especially in cases of bursty source traffic, where a multiplicity of users 
requires access from a central receiver. Starting with the ALOHA protocol \cite{Abramson}, several modifications have been suggested in the years to come aiming at performance improvement \cite{EphreTutor}. A very common application is in wireless LANs, such as the IEEE 802.11 protocol (see \cite{BianchiRACH}, \cite{GuptaRACHINFOCOM}, \cite{SharmaINFOCOM06} and references therein). The random access channel (RACH) is also included in the 3rd Generation Partnership Project (3GPP) as an important element within the Long Term Evolution (LTE) of cellular systems \cite{3gppRACH}, \cite{3gppson}, \cite{GPPTS36321}.

In the case of wireless cellular networks, a very limited frequency resource is reserved for the 
cases when a user requests for access from a base station (BS) or in order to be synchronized for uplink/downlink data transmission. 
RACH communications further occur during the hand-over phase \cite{Kiki11}, because 
of user mobility, or when a user is (re-)initiating some new service.  RACH channel can be used as well during the load balancing procedure \cite{GiovWSA12}, when cell-edge users are pushed to migrate to a neighboring BS after modification of the cell individual offset.

Due to limited resources, connection failure can occur in cases when the system is not well adapted to the incoming traffic. 
Consider for example large spaces in cities where occasionaly a vast amount of requests for service can be demanded, although 
normaly the system is not heavily loaded (e.g. metro stations, market streets, stadiums, city squares, areas close to concert and conference halls etc.).
In such places, it is very common that the system fails to support the service for all users and one of the reasons can be high collision rate in the RACH channel. 
It is thus necessary, within the context of Self Organizing Networks (SON) \cite{3gppson}, \cite{OsterboRACH} that the system can adapt to abrupt environmental changes that influence its functionality.  
Thus the RACH self-optmization problem is identified as an important case in the LTE standardization process \cite[paragraph 4.7]{3gppson}.

Unfortunately, in all such cases, the cellular system has almost zero user-specific information. Each BS can however broadcast certain information with 
cell-specific access details \cite{SelfOpt3G}, which allow the users to adapt their operation. Furthermore, carrier sensing as understood in the 802.11 is here not possible, which provides limitations to the design of high performance protocols. This is because, the possibility for a user to sense whether the channel is idle or not, is not provided and collision events cannot be avoided.

The procedure is called random access, due to the fact that the users access the channel in 
a random fashion. In the ALOHA case, when more than one user transmit simultaneously and 
their signals are detected we say that a collision occurs and all efforts are considered unsuccessful. LTE standardization, instead, provides the possibility for each user to randomly choose over a common pool of orthogonal frequencies \cite {3gppRACH} and a collision takes place 
when at least two users make the same choice during the same transmission interval.
After a failure, each source enters a back-off mode. The period of user silence is 
usually chosen having an exponential distribution but other possibilities can be used 
when such choice is adapted dynamically. This back-off time can generally be modeled in the slotted case by a per slot probability of transmission, less than 
$1$. Using this technique, an increase in throughput is achieved at the cost of additional delay.
Furthermore, since the detection or not of a user signal is also critical for the success, an important parameter is the transmission power of each user as well.

In short, the access (back-off) probability and the signal power are the two user actions.
An interesting idea to improve the decision making is to make certain global information of the system state available by broadcasting it from the base station. This is compatible with LTE standards where other type of information is already considered as globally known \cite{3gppRACH}. The information should 
represent the current system situation, so that users may adapt their actions dynamically. 
In this way the delay-throughput tradeoff can be enhanced. The cost is certain signaling and 
computations for the updates at the BS side. Furthermore, the BS
should have a way to gather relevant empirical information from its environment, related to the 
RACH functionality.

Based on the above idea, the current work suggests a dynamically adaptive RACH protocol for the cellular systems focused on LTE design. Empirical information is gathered through 
measurements and user reports. After certain processing at the BS side global system parameters are broadcast to users who require access. 
The protocol suggested, which is based on adaptation of the system to changes in the environment, further guarantees near-optimal performance related to a certain performance metric. 

\subsection{Related Literature}

Bianchi \cite{BianchiRACH} has been the first to provide a precise performance analysis for a random access protocol, which 
uses exponential back-off times. His approach considers a saturated system model, where the number of users is kept fixed to $N$ and 
all have a packet to send at each time slot. The results are based on the key approximation that the collision probability of a packet transmitted 
is constant and independent, which decouples the evolution of the system to $N$ 1-dimensional Markov Chains. 

A different approach has been suggested by Sharma et al. \cite{SharmaINFOCOM06}, where more general 
back-off strategies (generalized geometric) are considered for the IEEE 802.11 protocol 
in order to take service differentiation into account.
One of the major differences 
is that the system state is described by the current number of users per effort, while the collision probability is not independent per 
user.

First suggestions for dynamically controlling multiple access protocols can be 
found in Hajek and van Loon \cite{HajekRACH} as well as Lam and Kleinrock \cite{LamKlein75}. More recently Markov Decision Processes (MDPs) have been used in \cite{delAngel04} to derive optimal power and back-off policies for a set of backlogged users in slotted ALOHA random access systems. Cases of unknown user number have also been taken into account.

Gupta et al \cite{GuptaRACHINFOCOM} have recently suggested a dynamic back-off adaptation mechanism, where contention is regulated by broadcasting 
a so called contention level to the users. This is similar to the idea used in our approach.
Works of particular interest are also those of Liu et al \cite{Liu09} and Cheung et al \cite{Cheung10} which use the framework of utility-optimization for the optimal choice of transmission probabilities.

Channel-aware scheduling approaches in conjunction with random access mechanisms (which do not find application here due to the lack of such information in the system) include \cite{DimicTutor}, \cite{TongTutor}, and more recently \cite{BorstRandomAccNEW}. 

How random access works in the 3GPP-LTE systems is thoroughly described in \cite{SelfOpt3G}, 
where certain suggestions are presented, related to a self-organizing mechanism with information exchange between users and the Base Station. Investigations on the 
RACH power control include \cite{LeeRACH} and references therein, whereas an analytical framework for RACH modeling and optimization is given in \cite{YilmazRACH}.

Finally, rather interesting for the CSMA/CA case is the dynamic adaptation mechanism suggested in \cite{IdleSense} where users adapt their time window based on measurements and estimation of the average number of idle time slots of the random access channel. It involves an AIMD (Additive Increase Multiplicative Decrease) rule for the updates. Unfortunately, such a technique cannot be directly applied to the cellular system due to the unavailability of the sensing mechanism, it can however give ideas for application of a similar mechanism for the power updates.

\subsection{Contributions and Outline}

We investigate a saturated system model, where a number of $N$ users are always present within a wireless cell and try to gain access to the Base Station. An effort is successful when the user transmits a certain sequence, which is detected at the Base Station and at the same time no collision occurs. The event of collision will happen when the transmitted sequence of another  user is also detected. Furthermore, LTE standards allow for orthogonal sequences randomly chosen by the users, so that even when two user signals are detected, access to both may be granted.

In our analysis the miss-detection probability and collision probability are left as unknown variables. However, higher power increases the chances for detection and reduces collision probability, whereas use of access (otherwise back-off) probabilities reduces the collision events. Transmission power and access probability are the user action pair.

After description of the action space and state space, the transition probabilities are given and the evolution of the system is described by a Markov Chain. Furthermore, the event of dropping, when the users exhaust the maximum number of efforts allowed, plays an important role. Unfortunately, due to the unknown expression for the success probability no steady-state analysis is possible. The above are analytically presented in Section \ref{Section1}.

What we can do however, is to choose the actions myopically optimal, in the sense that they optimize the expected change in one time-slot for some function of the  state space. For this we introduce in our analysis the drift of a delay-related function. To motivate further our formulation, it is shown in the Appendix how the solution of the drift minimization problem is related to the solution of an ideal Markov Decision Problem for optimal performance in the steady-state. Our problem formulation is found in Section \ref{Section2}.

To solve the problem online a protocol is introduced. Its steps are presented in Section \ref{Section3}. The BS collects measurements as well as user reports to estimate the unknown probabilities (miss-detection, contention, success) at the Base Station side, as well as the current number of users, which is actually unknown in a real system. After solution of an optimization problem and a close-loop control problem, the BS broadcasts two values, the current \textit{contention level} and the current \textit{power transmission level}, so that the users can update 
their action pair. 

Numerical simulations for the performance of the protocol in a wireless cell are presented in Section \ref{Section4}. Advantages and trade-offs in delay and power expenditure are discussed and explicitly illustrated in plots. Finally, Section \ref{Section5} concludes our work.

\section{System Model}
\label{Section1}

\subsection{General Description}

We consider an arbitrary but fixed total number of $N$ users labeled by $n=1,\ldots,N$ trying to randomly obtain access to a cell Base Station (BS) 
over the wireless channel. The time is slotted, with each slot interval normalized to $1$ and indexed by $t$. At each time slot all users belonging to the user set have 
the possibility to access the channel by transmitting a preamble sequence (as specified in the LTE standards). There are two criteria that determine the success of an attempt. 

\begin{itemize}
\item \textit{The Signal-to-Noise Ratio (SNR) at the BS exceeds a predefined detection threshold $\gamma_d$}. If the SNR is below the threshold, we assume that a miss-detection occurs and the user has to retry. 
The \textbf{Detection Miss Probability} (DMP) can be written as the probability of an outage event
\begin{eqnarray}
 \label{DMPn}
Q^o_n\left(p_n,t\right) & = & \mathbb{P}\left[SNR_n\left(p_n\left(t\right),h_n\left(t\right)\right) \leq \gamma_d\right]
\end{eqnarray}
where $p_n$ is the chosen transmission power and the probability is taken over the random channel quantity denoted by $h_n$ and is i.i.d. over time $t$. In general we will consider that the BS does not approximate somehow the expression for outage. This is reasonable since the information over the user positions and the exact fading statistics is not known a priori.

\item \textit{No collision of transmitted signals occurs}. Typically in the slotted ALOHA protocol \cite{Abramson}, when more than one user attempts to access the channel during the 
same time slot a collision occurs and all affected users
have to repeat the effort.  In more recent wireless protocols, such 
as those suggested in LTE standards \cite{GPPTS36321}, a pool of orthogonal sequences (e.g. Zadoff-Chu) is made available to all users. Each user chooses 
one sequence from this set randomly (uniform distribution) and the probability of collision can be made less than $1$ when two users transmit simultaneously. 

In our model, the probability of collision is conditional on the transmission and the detection of signals at the BS side. That is, a user may collide only if he transmits at time slot $t$ and his signal is detected. Assuming that $N$ users transmit at time slot $t$ with transmission probability vector $\mathbf{1}_N := \left[1,\ldots,1\right]$ and $k$-out-of-$N$ (we write $k\setminus N$) are detected, the overall \textbf{Collision Probability} (CP) - the probability that at least one collision occurs - is an increasing function of both $N$ and $k$

\begin{eqnarray}
 \label{Col1}
Q^c\left(N,\mathbf{1}_N,k,t\right) 
\end{eqnarray}
As in the case of the DMP we consider that the base station does not have an exact closed form expression to calculate the CP and the above quantity is in general unknown.
\end{itemize}

\subsection{Action Space}
There are \textit{two actions} that user $n$ can take for transmission at time slot $t$.

\begin{itemize}
\item The choice of the \textbf{transmission power level} $p_n\left(t\right)$, which influences the detection of the 
transmitted signal at the BS, as shown in (\ref{DMPn}) and eventually the collision probability (through the number of detections $k$). In general $Q_n^o$ exhibits a monotone decreasing behavior with respect to power.

\item The choice of the \textbf{access (or transmission) probability} $b_n\left(t\right)$ per user, at a given slot $t$. This influences the number of simultaneously transmitting users in the cell and therefore directly affects the collision probability in (\ref{Col1}). The \textit{back-off} probability simply equals $1-b_n\left(t\right)$.
\end{itemize}

The set of actions for the entire system of $N$ users at $t$ is denoted by the $2N$-dimensional vector 
$\mathbf{A}\left(t\right):=\left(\mathbf{b}_N(t),\mathbf{p}_N(t)\right)$. The action space per time-slot is denoted by $\mathbb{A}$ and is the Cartesian product $\left[0,1\right]^N\times\left[0,P_1\right]\times\ldots\times\left[0,P_N\right]$, where $P_n$ is 
a given individual user power constraint per slot. Furthermore, $\tilde{\mathbf{A}}=\left\{\mathbf{A}(1),\ldots,\mathbf{A}(t),\ldots\right\}$.

Until the end of the subsection, we provide a discussion on the influence of choice for the back-off probability. In the definition (\ref{Col1}) no 
back-off action is taken, $b_n\left(t\right) = 1$, $\forall n$ and all users transmit simultaneously. On the other hand, assigning $b_n\left(t\right)\leq 1$ to some users, displaces the transmissions in time and the effect of collision is mitigated. Since less than $N$ users simultaneously compete for the access of the medium in some slot $t$, the collision probability is reduced.  
This can also be shown analytically. 

The overall collision probability of $N$ users present within the cell, with access probability $N$-length vector $\mathbf{b}_N$, $b_n\leq 1$ and exactly $k$ users detected, equals 

\begin{eqnarray}
 \label{ColisionBK}
Q^{c}\left(N,\mathbf{b}_N,k,t\right) & = & \sum_{J=0}^{N} Q^c\left(J,\mathbf{1}_J,k,t\right)\cdot Q^t\left(\mathbf{b}_N,J\setminus N\right)
\end{eqnarray}
where $Q^t\left(\mathbf{b}_N,J\setminus N\right)$ is the probability that - given a 
probability vector $\mathbf{b}_N$ - exactly $J$-out-of-$N$ users in the cell transmit. The equality follows from the total probability theorem, since the union of events $J=0,\ldots,N$ transmissions exhaust the sample space. The transmission probability of $J\setminus N$ users equals 

\begin{eqnarray}
Q^t\left(\mathbf{b}_N,J\setminus N\right) = \sum_{l=1}^{L\left(N,J\right)}\prod_{i=1}^J b_{q_l^{J.i}} \prod_{j=1}^{N-J} (1-b_{\hat{q}_l^{J.j}})\nonumber
\end{eqnarray}
where the summation over $l$ is taken over all possible 
$L\left(N,J\right) = \left(
\begin{tabular}{c}
$N$\\
$J$
\end{tabular}
\right)$ 
combinations (sampling without replacement) of $J$ users transmitting and $N-J$ users remaining silent, $q_l^{J.i}$ is the index of user $i$ belonging to combination $l$ that transmits and $\hat{q}_l^{J.j}$ is the index for the user $j$ that does not transmit.

 \begin{Pro}
  \label{Pro1}
Given $\mathbf{b}_N<\mathbf{1}_N$ (the inequality means that $b_n<1$ for at least one $n$) and exactly $1\leq k\leq N$ detections, we have that
 \begin{eqnarray}
  \label{Bigger}
 Q^c\left(N,\mathbf{b}_N,k,t\right) & < & Q^c\left(N,\mathbf{1}_N,k,t\right) 
 \end{eqnarray}
 \end{Pro}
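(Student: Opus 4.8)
\noindent The plan is to read the right‑hand side of (\ref{ColisionBK}) as a probability‑weighted average of the quantities $Q^c(J,\mathbf{1}_J,k,t)$ and to exploit that these are maximized over $J\in\{0,\ldots,N\}$ at $J=N$. First I would record that, by the same total‑probability argument that justifies (\ref{ColisionBK}), the family $\{Q^t(\mathbf{b}_N,J\setminus N)\}_{J=0}^{N}$ is a genuine probability distribution: each term is nonnegative and $\sum_{J=0}^{N}Q^t(\mathbf{b}_N,J\setminus N)=1$. Thus (\ref{ColisionBK}) says exactly that $Q^c(N,\mathbf{b}_N,k,t)=\mathbb{E}[Q^c(J,\mathbf{1}_J,k,t)]$, a convex combination in which $J$ is the random number of transmitting users induced by the back‑off vector $\mathbf{b}_N$.

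The second step is to bound each summand. Since the map $J\mapsto Q^c(J,\mathbf{1}_J,k,t)$ is nondecreasing (this is the monotonicity in the number of competing users stated just before (\ref{ColisionBK})) and, trivially, vanishes whenever $J<k$, we have $Q^c(J,\mathbf{1}_J,k,t)\le Q^c(N,\mathbf{1}_N,k,t)$ for every $0\le J\le N$. Substituting this into (\ref{ColisionBK}), pulling the constant $Q^c(N,\mathbf{1}_N,k,t)$ out of the sum, and using that the remaining sum of weights equals $1$, yields the non‑strict inequality $Q^c(N,\mathbf{b}_N,k,t)\le Q^c(N,\mathbf{1}_N,k,t)$.

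It remains to make the inequality strict, and this is the only genuinely delicate point. The hypothesis $\mathbf{b}_N<\mathbf{1}_N$ together with $b_n\le 1$ for all $n$ gives $\prod_{n=1}^{N}b_n\le b_m<1$, where $m$ is an index with $b_m<1$. Because the weight attached to the top value is precisely $Q^t(\mathbf{b}_N,N\setminus N)=\prod_{n=1}^{N}b_n$, the distribution assigns the strictly positive mass $1-\prod_{n=1}^{N}b_n$ to the event $\{J\le N-1\}$. On that event $Q^c(J,\mathbf{1}_J,k,t)$ is strictly below $Q^c(N,\mathbf{1}_N,k,t)$ --- either because it is zero (when $J<k$) or by strict monotonicity in the number of competitors --- so the weighted average is strictly below its maximal summand, which is exactly (\ref{Bigger}).

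The main obstacle I anticipate is precisely this last step: it requires $Q^c(N,\mathbf{1}_N,k,t)>0$ and a \emph{strict} form of the monotonicity that the text asserts only as ``increasing'' (the degenerate situations with $Q^c(\,\cdot\,,\mathbf{1},k,t)\equiv 0$, e.g. $k=1$, give only equality and should be read out of the statement). Everything else is bookkeeping around the total‑probability identity $\sum_J Q^t(\mathbf{b}_N,J\setminus N)=1$, so in writing the full proof I would isolate the strict monotonicity of $J\mapsto Q^c(J,\mathbf{1}_J,k,t)$ as the property actually invoked and keep the remaining argument exactly as above.
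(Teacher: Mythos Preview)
Your proposal is correct and follows essentially the same route as the paper: view (\ref{ColisionBK}) as a convex combination with weights $Q^t(\mathbf{b}_N,J\setminus N)$ summing to one, bound each summand by $Q^c(N,\mathbf{1}_N,k,t)$ via monotonicity in $J$ (with the $J<k$ terms vanishing), and use $\mathbf{b}_N<\mathbf{1}_N$ to force positive mass on a summand where the bound is strict. Your treatment of the strictness step is in fact more careful than the paper's, which simply asserts the strict inequality after noting that the bound is strict for $J<k$, without checking that such $J$ carry positive weight or explicitly invoking strict monotonicity as you do.
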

 
 \begin{proof}: The events $J=0,\ldots,N$ exhaust the sample space and we have that their probability sum equals 
$\sum_{J=0}^N Q^t\left(\mathbf{b}_N,J\setminus N\right) = 1$. Furthermore, for $J<k$ it holds $Q^c\left(J,\mathbf{1}_J,k,t\right)=0$  since there cannot be more detections than transmissions.
The higher the number of transmissions, the higher the collision probability, which means $Q^c\left(J,\mathbf{1}_J,k,t\right)\leq Q^c\left(N,\mathbf{1}_N,k,t\right)$, $\forall J$ and the inequality is strict for $J<k$. From (\ref{ColisionBK}) we have

\begin{eqnarray}
\label{proof1}
Q^{c}\left(N,\mathbf{b}_N,k,t\right) & < & Q^c\left(N,\mathbf{1}_N,k,t\right)\cdot \sum_{J=0}^{N} Q^t\left(\mathbf{b}_N,J\setminus N\right)\nonumber\\
& = & Q^c\left(N,\mathbf{1}_N,k,t\right)\nonumber
\end{eqnarray}
which concludes the proof.
 \end{proof}

\subsection{Success Probability, Failure Event and Dropping}

From the above, success of a transmission is an event which occurs when (i) a user transmits, 
(ii) the user signal is detected and (iii) no collision occurs. In the use of orthogonal 
sequences/preambles, it suffices that no two users sharing the same sequence collide. In general, conditioned that a user transmits, the \textbf{Success Probability} (SP) equals

\begin{eqnarray}
 \label{SuccProb}
 Q^s_n\left(N,k,\mathbf{b}_N,p_n,t\right) & = & \left(1-Q^o_n\left(p_n,t\right)\right) \cdot \left(1-Q^c\left(N,\mathbf{b}_N,k,t\right)\right)
\end{eqnarray}
Observe, that the success probability of a single user does not depend only on his own action set $\left(b_n,p_n\right)$, but also on the choices of access probabilities of the other users, as well as the number of detected users $k$. The latter is further dependent on the transmission power chosen for $j\neq n$, so we can instead write 

\begin{eqnarray}
Q^s_n\left(N,\mathbf{b}_N,\mathbf{p}_N,t\right)
\end{eqnarray}

In the case of an unsuccessful effort the user may retry. Each user is constrained to at most $M$ \textit{access efforts} and the efforts are indexed by $m$. After $M$ unsuccessful efforts the user is considered discarded and replaced by a new-coming one, 
so that the total user number in the system always remains equal to $N$. The same holds when a user leaves the system after success. Therefore, we say that the system is \textit{saturated}. The number of users at effort $m$ in time slot $t$ is denoted by $X_m\left(t\right)$ and from the above it follows that

\begin{eqnarray}
 \label{sumuser}
\sum_{m=1}^M X_m\left(t\right) = N, & \forall t.
\end{eqnarray}
We occasionaly write in the following that a user at effort $m\in\left\{1,\ldots,M\right\}$ belongs to \textit{user class} $m$.

\subsection{System States and Transition Probabilities}

We define the state of user $n$ at slot $t$ as the current transmission effort $S_n\left(t\right) \in \left\{1,\ldots,M\right\}$, whereas the 
system state as the $N$-dimensional vector
\begin{eqnarray}
 \label{SystemState}
\mathbf{S}\left(t\right) = \left(S_1\left(t\right),\ldots,S_N\left(t\right)\right).
\end{eqnarray}
Altogether, there are $M$ different user states and $M^N$ different system states (e.g for a cell with $10$ users and maximum $5$ efforts, the number is approximately $10$ million). 
The entire state space is denoted by $\mathcal{S}$. 
It is easy to verify that the system state forms an $N$-dimensional Markov chain.

We group the transitions 
for each user into (a) returning to state $1$ in case of transmission and success, 
(b) moving to the next effort in case of transmission and failure and 
(c) backing-off and remaining in the same state. The expressions for the transition probabilities are given below. (Dependence of the functions on other parameters except the time index is omitted for brevity of presentation.)

\begin{itemize}
 \item For $1\leq m< M$:
\begin{eqnarray}
 \label{TrPm1}
\mathbb{P}\left[S_n\left(t+1\right) = 1|S_n\left(t\right)\right] 			& = & b_n\left(t\right)\cdot Q_n^s\left(t\right)\\
\label{TrPm2}
\mathbb{P}\left[S_n\left(t+1\right) = S_n\left(t\right)+1|S_n\left(t\right)\right] 	& = & b_n\left(t\right)\cdot  \left(1-Q_n^s\left(t\right)\right)\\
\label{TrPm3}
\mathbb{P}\left[S_n\left(t+1\right) = S_n\left(t\right)|S_n\left(t\right)\right] 	& = & 1-b_n\left(t\right)
\end{eqnarray}
\item For the user boundary state $m=M$:
\begin{eqnarray}
 \label{TrPMm1}
\mathbb{P}\left[S_n\left(t+1\right) = 1|S_n\left(t\right)=M\right] 			& = & b_n\left(t\right)\\
\label{TrPMm2}
\mathbb{P}\left[S_n\left(t+1\right) = M|S_n\left(t\right)=M\right] 			& = & 1-b_n\left(t\right)
\end{eqnarray}
A user in state $M$ will either back-off, in which case he remains in the same state, or transmit. When a user transmits, he will either succeed or fail. 
In both cases the next state is set to 1, the user is removed from the system and is replaced by a new one so that the total number is always equal to $N$.
The transition probabilities in (\ref{TrPMm1})-(\ref{TrPMm2}) for $m=M$ coincide with those for $m<M$, given by (\ref{TrPm1})-(\ref{TrPm3}) when 
$Q_n^s\left(t\right)=1$. In other words, to keep the system saturated, the Markov Chain evolves as if transmission at state $M$ always results in success.
\end{itemize}
This is why, it is further important for the analysis to specify the user \textbf{Dropping Probability} (DP)
\begin{eqnarray}
 \label{TrPdrop}
Q^d_n\left(N,\mathbf{b}_N,\mathbf{p}_N,M,t\right) & = & b_n\left(t\right) \cdot \left(1-Q^s_n\left(t\right)\right)\cdot \mathbb{P}\left[S_n\left(t\right)=M\right]
\end{eqnarray}
If the exact expressions for the DMP and CP were available, it would be possible to calculate the steady state probabilities of the system, by forming the $M^N\times M^N$ transition probability matrix and using the Perron-Frobenius theory  \cite[Ch. 2 and 8]{NonnegativeMatrix}. Since the number of states is finite, 
and for each user the probabilities (\ref{TrPm1})-(\ref{TrPm3}) and (\ref{TrPMm1})-(\ref{TrPMm2}) sum up to $\sum_{m=1}^M \mathbb{P}\left[S_n\left(t+1\right)=m|S_n\left(t\right)\right] = 1$ (stochastic matrix),  
a steady state with probability sum equal to $1$ always exists, although certain states may be transient and have zero probability.

\section{Problem Statement as Drift Minimization}
\label{Section2}

Since the exact expressions for the detection miss probability $Q_n^o$ as well as contention probability $Q^c$ are unknown (hence the success probability $Q_n^s$, which appears in (\ref{TrPm1}) and (\ref{TrPm2})), it is not possible to use the standard steady-state analysis as followed in \cite{TakKlein85}, \cite{Maglaris87}, \cite{Proutiere08}, \cite{Paschos07}, \cite{Klein75b} and \cite{Liu09} (among others) to derive long-term performance measures and optimize the system. Even if this would be possible however, the solution of a system of such an immense number of variables would bring difficulties (remember the number of 10 million variables for $N=10$ and $M=5$). The same problems are met in a Markov Decision Problem (MDP) formulation, as followed e.g. in \cite{LamKlein75} and \cite{delAngel04}.

Furthermore, in a realistic setting, we would like to propose a protocol, which takes into consideration the fact that within the wireless cell, users appear and leave the system after a while, whereas the fading situation changes unpredictably. These two factors greatly influence the miss-detection and collision probabilities, which do not remain fixed until infinity, but exhibit large fluctuations over time. This falls within the concept of SON's which should self-adapte and self-optimize the wireless system parameters as a reaction to such unpredictable changes from outside without human intervention. 

For the above reasons we make use of the notion of \textit{drift for the Markov Chain} under study, in order to achieve an improvement in the system performance by appropriate choice of actions. The idea of drift is commonly used in the literature of stability of systems with infinite states \cite{TassQue93}, \cite{TassMultihop}, \cite{NeelySat}, \cite{NeelyRouting}. In such cases, if we can find, for a given positive Lyapunov function, an action policy which keeps the drift negative for the entire state space - except possibly for some finite subspace - the system is guaranteed to remain stable. This comes from direct application of Foster's theorem (see \cite[Prop. 5.3(ii)]{Asmussen}). Intuitively the negative drift gives the function of states a tendency to decrease in expectation at each step, as long as it is outside the aforementioned subspace, so that in the long run the value a state can take will not be unbounded (and the stability is guaranteed).
In our case the state space is finite due to the finiteness of $M$. However, since the amount of users that exceed $M$ efforts are eventually dropped, stability of the system refers to keeping the number of dropped users finite. (Alternative application of the drift minimization to a problem with $M\rightarrow\infty$ and no dropping does not change much the policy and results).

The drift equals per definition, the expected change in the Lyapunov function from $t$ to $t+1$. By choosing an appropriate non-negative function of the system state $V\left(\mathbf{S}\left(t\right)\right)$ related to some performance criterion, we can choose actions that optimize performance at each time-slot. Since it is impossible to know how the system will evolve in future slots, and since expressions for DMP and CP are not available, the best thing we can do is to provide a one-step look-ahead (\textbf{myopic}) policy for the system, given its current state and measurements performed on time $t$, which estimate unknown parameters. Specifically, given that the system state at $t$ is $\mathbf{S}\left(t\right)$, the drift is defined as

\begin{eqnarray}
 \label{Drift}
D\left(V\left(\mathbf{S}\left(t\right)\right),\mathbf{A}\left(t\right)\right) & := & \mathbb{E}\left[V\left(\mathbf{S}\left(t+1\right)\right)-V\left(\mathbf{S}\left(t\right)\right)|\mathbf{S}\left(t\right)\right]
\end{eqnarray}
and is also a function of the action set $\mathbf{A}\left(t\right)$, since the actions control the system state transition probabilities $p_{s_t\rightarrow s_{t+1}}$.

The function $V$ to be used is the sum of user states and is linear. It can be rewritten as the sum of cardinalities of users at a state, weighted by their effort index. 
\begin{eqnarray}
 \label{FunctionVhere}
V\left(\mathbf{S}\left(t\right)\right) = \sum_{n=1}^{N} S_n\left(t\right) = \sum_{m=1}^{M} m\cdot X_m\left(t\right)
\end{eqnarray}

A user who is currently at a higher effort, contributes more to the function, than 
users at lower ones. By \textbf{minimizing} the drift of such function we wish to choose 
appropriate actions in order to have success with as few efforts as possible. This has following objectives: 
\begin{itemize}
\item keep a good trade-off between power consumption and delay until success per user 
\item diminish the proportion of users who are dropped 
\item maximize a notion of total system throughput
\end{itemize}
To understand the last point, observe that each user $n$ contributes a ratio $\frac{1}{m^*_n}$ to 
the total system throughput if $m^*_n\leq M$ efforts are required for success and contributes nothing if the user is dropped. Consider now as a single virtual user, the set $N$ of users in the network. By use of the Renewal-Reward theorem \cite{GiovSCC08}, the long-term throughput of such a virtual user (considering only number of efforts and not the total number of time-slots required including user silence slots) will be the ratio $\frac{N}{\mathbb{E}\left[V\left(S\right)\right]}$. Alternative Lyapunov function could change the objective of the minimization, giving emphasis to total delay or power consumption and can be understood as alternative formulations of the same general problem and solution methodology.

Let us consider state-dependent, rather than user-dependent actions, in the sense that all users who are at class $m$ in slot $t$ should make the same choice for transmission power and back-off. The specific drift expression can now be derived to yield

\begin{eqnarray}
 \label{DriftHere}
D\left(V\left(\mathbf{S}\left(t\right)\right),\mathbf{A}\left(t\right)\right) & = & \sum_{n=1}^N\left\{1\cdot\mathbb{P}\left[S_n\left(t+1\right) = 1|S_n\left(t\right)\right]+ \right.\nonumber\\
								&   & \left(S_n\left(t\right)+1\right)\cdot\mathbb{P}\left[S_n\left(t+1\right) = S_n\left(t\right)+1|S_n\left(t\right)\right]+\nonumber\\
								&   & \left.S_n\left(t\right)\cdot\mathbb{P}\left[S_n\left(t+1\right) = S_n\left(t\right)|S_n\left(t\right)\right]- S_n\left(t\right)\right\}\nonumber\\
								& \stackrel{(\ref{TrPm1})-(\ref{TrPMm2})}{=} & \sum_{n=1}^N {b_n\left(t\right)\cdot\left[1-S_n\left(t\right)\cdot Q^s_n\left(N,\mathbf{b}_N,\mathbf{p}_N,t\right)\right]}\nonumber\\
								& \stackrel{state \ dep.}{=} &  \sum_{m=1}^M X_m\left(t\right)b_m\left(t\right)\cdot\left[1-m Q^s_m\left(N,\mathbf{b}_N,\mathbf{p}_N,t\right) \right]
\end{eqnarray}

The drift minimization problem at each time slot $t$ is
\begin{eqnarray}
 \label{ProblemG}
\begin{tabular}{l l}
\textbf{min} 					& $D\left(V\left(\mathbf{S}\left(t\right)\right),\mathbf{A}\left(t\right)\right)$\\
\textbf{s.t.}					& $\mathbf{A}\left(t\right)\in\mathbb{A}$
\end{tabular}
\end{eqnarray}
A further motivation to pose the problem as a drift minimization is provided in the Appendix. It is shown that (\ref{ProblemG}) is a myopic solution of an MDP with objective the minimization of the 
expected Lyaponov function at the steady-state (for $t\rightarrow \infty$). For the formulation and solution of the MDP, the expression for $Q^s_n$, $\forall n$ should be available and the channel/user statistics should remain unchanged over the entire time horizon.
 
What is needed to solve the above problem per slot? It follows from (\ref{DriftHere}) that the 
following information should be available at the BS side:

\begin{enumerate}
 \item The cardinality $X_m\left(t\right)$ of users at each effort $m$.
 \item The current value of $Q^o_m\left(t\right)$ at each $m$.
 \item The current value of $Q^c\left(t\right)$.
\end{enumerate}
Using 2. and 3. and the product in (\ref{SuccProb}) the actual value of $Q^s_m\left(t\right)$ can be obtained. Although the BS does not know these values it may estimate the variables and with it approximate the objective function,
using \textit{measurements} related to channel and service quality, as well as 
\textit{information} reported directly by the user set. The goal is to use these estimates for optimization, in order to achieve significant performance gains, while keeping an additional overhead of exchanged information as small as 
possible. 

In this way, a sequence of problems with different 
numbers of users, contention and miss-detection probabilities can be solved over time, which help the cell to follow and adapt to dynamic unpredictable changes.
The steps of the proposed adaptive protocol are summarized in Table \ref{RACHalgo}.

\section{Five Steps of the Protocol}
\label{Section3}

Before proceeding to the algorithm, we first discuss over the action pair of access probabilities and transmission powers. Considering the access probabilities, we adopt the approach in \cite{GuptaRACHINFOCOM} (similar functions are also found in \cite{Liu09} and references therein), with per effort probability given by
\begin{eqnarray}
 \label{backoffFUN}
b_m\left(t\right) = \min\left\{\frac{f(m)}{L\left(t\right)}, 1\right\},\ \forall m.
\end{eqnarray}
Here and hereafter, $L$ is called \textit{contention level} and $f(m)$ is some fixed function of the transmission effort. In this way, a simple variable $L$ 
can simultaneously define the entire set of transmission probabilities. By choosing $f$ to be monotone increasing in $m$, priority is given 
to users with higher efforts, while such users obtain lower priorities when $f$ is strictly monotone decreasing. Typical back-off protocols follow the exponential rule, which reduces by half the probability 
of accessing the channel after each 
failure, so in this case $f(m) = 2^{-m+1}$ and $b_1 = 1/L$. Other possible choice could be $f(m) = m^{-a}$, $a\in\mathbb{R}_+$ (in this work and the simulations to follow the case $a=1$ is mostly used). Exponents $a>1$ will lead to an overly conservative system with large delays for users in higher states, whereas $a<<1$ tends to treat users of all classes with the same priority. 
In the following, the expression in (\ref{backoffFUN}) will sometimes be replaced by $b_m(t) = f(m)/L\left(t\right)$ and the constraint $b_m\left(t\right)\leq 1$ 
is taken into account in the constraint set of the minimization problem.

We consider, furthermore, the transmission power to vary per effort as a ramping function. This approach is often considered in practice 
(for related approaches, the reader is referred to \cite{SelfOpt3G} and references therein). The power level for 
the first effort is given by $p$ and for all efforts by the expression
\begin{eqnarray}
 \label{power}
p_m\left(t\right) & = & p\left(t\right) + \left(m-1\right)\cdot \Delta p,\ \forall m
\end{eqnarray}
where $\Delta p$ is the ramping step with a fixed (tunable) value. Thus, analogously to the case of the backoff probabilities, the 
vector of power actions can be defined by appropriate choice of the \textit{power level} $p\left(t\right)$ per time slot.

\subsection{Step 1: Measurements and User Reports}
When users attempt to randomly access the channel, we assume that the BS counts the overall number of detected user efforts, as 
well as the overall number of successful efforts. Given an observation window of length $W$, both the quantities 
depend on the time interval $\left[t-W+1,t\right]$ and are denoted by 
$N_d\left(t\right)$ and $N_s\left(t\right)$ respectively. 
Furthermore, after every successful effort, the users are assumed to \textit{report} to the BS, the total number of trials required to get access. 
In this way, the BS can keep track of the number of successes at effort $m$, within the observation window, denoted by
$n_{s,m}\left(t\right),\ \forall m$. 
The reports over the success state also provide information over the overall number of transmissions of users being at some state $m$. 
As an example, if within the observation period two users report success at effort $3$ and $2$ respectively, the BS can estimate the number of 
transmissions at state $m=1$ by $2$, at $m=2$ by $2$ and at state $m=3$ by $1$, without considering users that have yet not declared success, or are dropped. 
We denote these estimates by
$n_{t,m}\left(t\right),\ \forall m$ and their sum, which equals approximately the number of access efforts within the 
observation window, by $N_t\left(t\right) = \sum_{m=1}^M n_{t,m}$. 
Altogether, the set of gathered empirical information, updated per time slot, is represented by

\begin{eqnarray}
 \label{empirics}
\mathcal{I}\left(t\right):=\left\{N_d(t),\  N_s(t),\ N_t(t),\  n_{s,m}(t),\forall m,\ n_{t,m}(t),\forall m\right\}
\end{eqnarray}

\subsection{Step 2: Estimation of Unknowns in the Objective function}
Using the above counters, we can now approximate the unknowns in the expression (\ref{DriftHere}) that are briefly discussed in points 1. - 3. in the previous Section.

As far as the unknowns in 2. and 3. are concerned, the actual overall contention probability $Q^c\left(t\right)$ and per effort success probability $Q_m^s\left(t\right)$ in (\ref{SuccProb}), 
can be estimated by contention and success \textbf{rates}, an idea which has already appeared in \cite{SelfOpt3G}. Observe that 
the additional information about the per effort miss-detection probability $Q_m^o\left(t\right)$ cannot be deduced from the above measurements. What can be calculated, instead, is an 
overall rate of miss-detection (DMR), without differentiating between efforts, which we denote by $R^o\left(t\right)$.

\begin{eqnarray}
 \label{DMRa}
R^c\left(t\right) = 1-\frac{N_s\left(t\right)}{N_d\left(t\right)} & & (contention\ rate) \\
\label{DMRb}
R^s_m\left(t\right) = \frac{n_{s,m}\left(t\right)}{n_{t,m}\left(t\right)}, &  \forall m& (success\ rate\ per\ effort)\\
\label{DMRc}
R^o\left(t\right) = 1-\frac{N_d\left(t\right)}{N_t\left(t\right)} & & (miss-detection\ rate).
\end{eqnarray}

Regarding the number of users currently within the cell (discussed in 1.) and their estimation, 
we proceed as follows. Instead of attempting to find integer values, we consider arrival rates. 
As the total arrival rate of users 
we consider the ratio $\frac{N_s\left(t\right)}{W}$, which is the time dependent ratio 
of accepted users, divided by the observation window. The above is used under the assumption that only a very small 
fraction of the users are dropped throughout the process, so that almost all users appearing within the cell, will eventually 
have at some point a success. Taking dropped users into account requires an additive correcting term that may be deduced from empirical observations.

The window is considered long enough, so that the resulting success rates per state, $R^s_m\left(t\right)$ in (\ref{DMRb}),
approach the actual success probability per effort. These can replace the entries in the one-step transition 
probability matrix in equations (\ref{TrPm1})-(\ref{TrPm3}) and (\ref{TrPMm1})-(\ref{TrPMm2}). 
%
The steady state probability distribution is found by solving the system $\mathbf{\pi=\pi\cdot \hat{P}_M}$, where 
$\mathbf{\pi}$ is the row vector of the unknown probabilities for the $M$ states with $||\mathbf{\pi}||_1 = 1$ and  $\hat{P}_M$ is the transition probability matrix. 
The solution equals

\begin{eqnarray}
 \label{solva}
\pi_1\left(t\right) & = & \left({1+\sum_{i=2}^{M}\frac{b_1}{b_i}(1-R^s_1\left(t\right))\cdot\ldots\cdot (1-R^s_{i-1}\left(t\right))}\right)^{-1}\\
\label{solvb} 
\pi_m\left(t\right) & = & \pi_1\left(t\right)\cdot \left(\frac{b_1}{b_m}(1-R^s_1\left(t\right))\cdot\ldots\cdot (1-R^s_{m-1}\left(t\right))\right),\ 2\leq m\leq M.
\end{eqnarray}
The ratios of the unknown backoff probabilities $b_1/b_m$ are involved in the expression above. From the previous discussion $b_1/b_m = f(1)/f(m)$, 
which is known since the function $f$ is chosen a priori. With these observations and definitions at hand, we can estimate the user arrivals per effort according to

\begin{eqnarray}
 \label{EstimX}
\frac{X_m\left(t\right)}{W} \approx \pi_m\left(t\right)\cdot \frac{N_s\left(t\right)}{W}
\end{eqnarray}
where the $\pi_m$'s are the probabilities given by (\ref{solva}) and (\ref{solvb}).

\subsection{Step 3: Solving the Problem}
Once step 2 is performed, we can formulate the objective function to approximately solve problem (\ref{ProblemG}) and with it find the optimal actions per time slot. 
To this end, we break down the problem into two subproblems and propose two sub-algorithms based on the measurements and estimated quantities described above.

\textbf{Backoff Probability Problem}:
The objective function at the base station is estimated by

\begin{eqnarray}
 \label{ObjApprox}
\hat{D}\left(V\left(\mathbf{S}\left(t\right)\right),L\left(t\right)\right) := \frac{1}{L\left(t\right)}\cdot \left[\sum_{m=1}^M \pi_m\frac{N_s\left(t\right)}{W}f\left(m\right)\cdot \left(1-m\cdot R^s_m\left(t\right)\right)\right],
\end{eqnarray}
where the success probability $Q_m^s$ is substituted by the success rate $R_m^s$ in (\ref{DMRb}) and the average user number $\frac{X_m}{W}$ by the expression in (\ref{EstimX}). 
As long as such estimates are close to the actual values and are considered reliable, the BS can solve a problem with parameters adapted to the changing environment. 

When the expression in brackets above $\left[\ldots\right]$ is positive, the objective function is convex and decreasing in the contention level variable $L$ (behaves as $+\frac{1}{L}$). When $\left[\ldots\right]$ is negative, the objective is concave and increasing in $L$ (behaves as $-\frac{1}{L}$). Due to the monotonicity and concavity/convexity, the optimization will have as a result 
either maximum or minimum value of $L$ depending on the sign of the term inside the square brackets. 

In the following we provide the boundary values $L_{\min}$ and $L_{\max}$ of the domain of $L$. The lower bound on $L$ follows from the fact that 
all access probabilities are less than or equal to $1$:
\begin{eqnarray}
 \label{LowerL}
\frac{f\left(m\right)}{L\left(t\right)}\leq 1,\ \forall m &\Rightarrow & L\left(t\right)\geq L_{\min}:=\max\left\{f(m)\right\}.
\end{eqnarray}
To obtain an upper bound, we further provide a constraint on the probability of a time slot being idle (no user transmits). This probability is less than 
or equal to $\mathcal{A}$, which is a design factor for the system.

\begin{eqnarray}
 \label{Idleslot}
\mathbb{P}\left[IDN\right] = \prod_{m=1}^M\left(1-\frac{f(m)}{L\left(t\right)}\right)^{ \frac{X_m\left(t\right)}{W}} & \leq \mathcal{A} & \Rightarrow \nonumber\\ 
\sum_{m=1}^M \pi_m\frac{N_s\left(t\right)}{W}\cdot\log\left(1-\frac{f(m)}{L\left(t\right)}\right) & \leq \log(\mathcal{A}) &.
\end{eqnarray}
The left handside is increasing with $L$, thus the inequality provides an upper bound on $L$. If we solve (\ref{Idleslot}) for equality, we then derive the value of $L_{\max}$. Notice furthermore that, all values of $L$ within the interval $\left[L_{\min},L_{\max}\right]$ are feasible solutions of the contention level.

\begin{Pro}
\label{ProSum}
Considering the problem of minimizing $\hat{D}$ in (\ref{ObjApprox}) subject to the upper and lower bound constraints on $L$, the following 
necessary and sufficient optimality conditions hold:
\begin{itemize}
  \item if $\left[\sum_{m=1}^M \pi_m\frac{N_s\left(t\right)}{W}f\left(m\right)\cdot \left(1-m\cdot R^s_m\left(t\right)\right)\right]\geq 0$ then 
the optimal contention level equals $L_{\max}$ and is found by solving
\begin{eqnarray}
 \sum_{m=1}^M \pi_m\frac{N_s\left(t\right)}{W}\cdot\log\left(1-\frac{f(m)}{L^*\left(t\right)}\right)& = & \log(\mathcal{A})
\end{eqnarray}

  \item if $\left[\sum_{m=1}^M \pi_m\frac{N_s\left(t\right)}{W}f\left(m\right)\cdot \left(1-m\cdot R^s_m\left(t\right)\right)\right]< 0$ then the optimal 
contention level equals $L_{\min}$
\begin{eqnarray}
 L^*\left(t\right)= \max\left\{f(m)\right\}.
\end{eqnarray}
\end{itemize}
\end{Pro}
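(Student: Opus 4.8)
The plan is to treat the bracketed sum in $\hat D$ as a constant with respect to the decision variable, and thereby reduce the problem to minimizing the scalar function $L \mapsto C/L$ over the compact interval $[L_{\min},L_{\max}]$, where $C := \sum_{m=1}^M \pi_m \frac{N_s(t)}{W} f(m)\,(1 - m\,R^s_m(t))$ does not depend on $L$ (all of $\pi_m$, $N_s(t)$, $W$, $f(m)$, $R^s_m(t)$ are fixed once Step~2 has been carried out). First I would recall, from the discussion preceding the proposition, that the feasible set for $L(t)$ is exactly $[L_{\min},L_{\max}]$: the lower bound is forced by $f(m)/L \le 1$ for all $m$, i.e. $L \ge \max_m f(m) = L_{\min}$; the upper bound comes from the idle-slot requirement $\mathbb{P}[IDN]\le \mathcal A$, whose logarithmic form $\sum_{m=1}^M \pi_m \frac{N_s(t)}{W}\log(1 - f(m)/L)$ is strictly increasing in $L$ on $(L_{\min},\infty)$, so the inequality is equivalent to $L \le L_{\max}$ with $L_{\max}$ the unique root of the associated equality.

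Next I would differentiate: $\frac{d}{dL}\!\left(\frac{C}{L}\right) = -\frac{C}{L^2}$ and $\frac{d^2}{dL^2}\!\left(\frac{C}{L}\right) = \frac{2C}{L^3}$. Since $L>0$ on the feasible set, the sign of the first derivative is the opposite of the sign of $C$ and the sign of the second derivative agrees with that of $C$. Hence, if $C \ge 0$ the objective is convex and non-increasing in $L$ (strictly decreasing when $C>0$), so it attains its minimum over $[L_{\min},L_{\max}]$ at the right endpoint $L^*(t)=L_{\max}$, which by the previous paragraph is characterized by $\sum_{m=1}^M \pi_m \frac{N_s(t)}{W}\log(1 - f(m)/L^*(t)) = \log(\mathcal A)$. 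If instead $C<0$, the objective is concave and strictly increasing in $L$, so its minimum is attained at the left endpoint $L^*(t)=L_{\min}=\max_m f(m)$. This gives sufficiency of each sign condition for the stated minimizer; since the two conditions are mutually exclusive and exhaustive and the minimizer is unique in the strictly monotone cases, they are also necessary.

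There is no real obstacle beyond bookkeeping; the two points deserving a line of care are (i) confirming that the feasible domain is genuinely the closed interval $[L_{\min},L_{\max}]$, which rests on the monotonicity of the idle-slot expression already invoked in the text, and (ii) the degenerate case $C=0$, in which $\hat D \equiv 0$ on the whole interval, so every $L$ — in particular $L_{\max}$ — is optimal and the ``$\ge 0$'' branch of the statement remains correct although the minimizer is then not unique. I would note this explicitly to keep the ``necessary and sufficient'' phrasing honest, and I would also remark that the problem is well posed only when $L_{\min}\le L_{\max}$, which is a standing design assumption on $\mathcal A$.
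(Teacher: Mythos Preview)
Your proposal is correct and follows the same approach as the paper: the paper does not give a formal proof of this proposition but relies on the paragraph immediately preceding it, which observes that $\hat D$ behaves as $+1/L$ or $-1/L$ according to the sign of the bracketed term and hence is monotone on the feasible interval, forcing the optimum to an endpoint. Your write-up is in fact more careful than the paper's, since you explicitly handle the degenerate case $C=0$ and the standing assumption $L_{\min}\le L_{\max}$.
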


\textbf{Power Control Problem}: In order to identify optimal transmission levels, one could proceed along similar lines 
as above, to formulate an optimization problem, given the back-off probabilities $f(m)/L^*(t)$ and the contention rates 
$R^c(t)$ from (\ref{DMRa}). In order to determine the objective function based on (\ref{DriftHere}), 
which is denoted by $\tilde{D}\left(V\left(\mathbf{S}\left(t\right)\right),p\left(t\right)\right)$, the closed form expression 
for the detection-miss probability $Q_m^o\left(t\right)$ as a function of power may be necessary. It 
is however unlikely that the channel's fading behavior in practical systems can be accurately represented by a closed-form expression, especially since in the random access cellular system the user position is not known to the BS.

A different approach - which is adopted here - is to use a \textit{Multiplicative-Increase-Additive-Decrease} (MIAD) control rule, as in the case of congestion control protocols in 
TCP \cite{CongestionRef}. In this way, the BS reacts to the change of the estimated DMR stepwise, by increasing or decreasing the power level $p(t)$ per 
time slot, depending on the current value $R^o\left(t\right)$. We set two levels of action, a high detection-miss level $DMR^H$ and a low one $DMR^L$. The control 
loop then works as follows: When $DMR^H$ is exceeded, the power level is increased by multiplication with a tunable factor $1+\delta_1$. 
This action increases considerably the transmission power since miss-detection is highly non-desirable. When the ratio falls under the low level $DMR^L$, 
which is considered satisfactory for the system performance, 
the power is reduced in a conservative way, to reduce the energy consumption on the mobile devices, 
by subtracting a constant tunable amount of $\delta_2$. For instance $\delta_2$ can be set equal to the ramping step $\Delta p$ in (\ref{power}). 
The control loop is then 
described by the power updates

\begin{eqnarray}
 \label{PowerControl}
p^*\left(t\right) & = & \left\{
\begin{tabular}{l l}
 $p^*\left(t-1\right)\cdot\left(1+\delta_1\right)$, & if $R^o\left(t\right)>DMR^H$\\
 $p^*\left(t-1\right) - \delta_2$, & if $R^o\left(t\right)<DMR^L$
\end{tabular}
\right..
\end{eqnarray}

Obviously, updates on the per-effort ramping steps or user-specific power control could be much more beneficial instead of the update in the global power level $p\left(t\right)$. Furthermore, it is obvious that by varying $p\left(t\right)$ globally, power consumption will increase not only for users in higher efforts but also for those in their first effort, which may not be necessary. However, there are certain difficulties in providing a different type of feedback. Most 
importantly, there is no user channel state information available at the BS and channel adaptation is impossible. Furthermore, based on the possible approximations that - given the measurements and the reports - are suggested, only a global miss-detection rate $R^o$ can be estimated in (\ref{DMRc}) and no state-specific or user-specific rates (say $R_m^o$). We cannot approximate, in other words, the rate of miss-detection for a user at different states and as a result we cannot suggest different state-dependent power levels. Finally, state-dependent power control would increase considerably the feedback information broadcast to all users. For all the above reasons, the suggestion of the MIAD rule was considered more appropriate.

\subsection{Step 4 and 5: Broadcast of Information to the Users and Action Calculation}

The last two steps of the proposed algorithm involve the broadcasting of the action-related information to the users and the choice of 
appropriate actions by them. The broadcast information includes the pair consisting of the contention level and the power level

\begin{eqnarray}
 \label{BroadcastInfo}
\mathcal{J}\left(t\right) & := & \left\{L^*\left(t\right), p^*\left(t\right)\right\}.
\end{eqnarray}
Let us assume that the expressions in (\ref{backoffFUN}) and (\ref{power}) for the success probability and the power level per effort are known a priori to the mobile stations.
Since each user is aware of its current individual state $S_n\left(t\right)$, calculation of its own action pair is possible, according to
\begin{eqnarray}
 \label{actionsetuser}
A_n\left(S_n\left(t\right),\mathcal{J}(t)\right) = \left(b_n(t), p_n(t)\right) = \left(\frac{f(S_n(t))}{L^*(t)}, \ p^*(t) + S_n(t)\Delta p \right).
\end{eqnarray}

Note that if the required power and access functions ($f\left(\bullet\right)$ and the ramping step $\Delta p$) is not available at the mobiles, the BS could broadcast the entire vector of computed transmission powers and 
access probabilities to the users so that they choose the actions according to their current effort.

A remark considering implementation issues of such protocols is that the updates of these two levels are not expected to take place very frequently, but rather only at the rate of estimated change of user traffic and fading conditions. Furthermore, user reports and broadcast feedback from the BS is already suggested in standardization reports, so that the proposed protocol complies fully with the existing standardization literature \cite{3gppRACH}, \cite{3gppson}, \cite{GPPTS36321}, without introducing additional protocol information.


\section{Numerical results}
\label{Section4}

\subsection{Description of the simulations setting}

The proposed algorithm has been implemented in a single cell scenario. 
The 
users are randomly positioned, with a 
2D uniform distribution and the algorithm is evaluated for the cases of $N = 1,2,\ldots,14$ [users/time slot] present in the cell. 
Considering the transmission scenario, each user randomly chooses at each attempt one sequence,
 out of a pool of $10$ orthogonal sequences, 
and transmits with a chosen backoff probability and transmission power. The number $10$ is used for simulation purposes, whereas the actual number suggested in the LTE literature equals $64$; however not all users have access to the entire pool of sequences (see \cite{3gppRACH}) since the sequence allocation procedure is more complicated than the simple uniform choice we use here.

The signal experiences path loss due to the user-BS distance. Fast fading is initially not modeled (this will be considered in the second part of the Section for the power consumption evaluation) but the channel is considered AWGN with noise mean equal to $-133.2$ dBm. We have to note that in case fast-fading were also implemented, a further randomness in the channel would affect the signal detection and the protocol performance. To keep things simple, we consider first only the randomness of user positioning which affects the slow-fading coefficients - also unknowns during the procedure. The evaluation of the protocol's performance will not change much by adding more randomness factors.

An effort is successful when among the detected sequences there exists no pair that collides, in the sense that no two detected users choose the same sequence for transmission. 
A user is dropped when the effort fails at the maximum access effort $M=5$. 
After a success or an event of dropping, users are removed from the waiting-for-transmission list, 
and the same number of newly arriving users are added, each given a random position on the plane.  

Power and access probability for the users are computed per slot equal to the action pair in (\ref{actionsetuser}), for $f\left(m\right) = m^{-1}$.  The choice of exponent $-1$ is not conservative (whereas a higher exponent would be) while at the same time it takes class differentiation into account. Important is 
to notice that the expression of the function $f$ greatly affects the delay. On the other hand, the 
delay can be controlled by the parameter $\mathcal{A}$ which is system-operator-dependent and tunes the expected idle period. The set of values for the parameters of the system simulation are summarized in Table \ref{parameterA}.

Several factors for the 
protocol design have been left open for choice. One of them, as mentioned already, has been the desired idle probability $\mathcal{A}$. The higher factor  $\mathcal{A}$ is, the more the delay suffered by the system but the higher the benefits in dropping rate and power consumption are. Other important parameters are the steps $\delta_1,\ \delta_2$ and bounds $DMR^H$, $DMR^L$ of the MIAD rule, the access function $f$ and the adaptive window length $W$, which defines how fast should the protocol adjust to environmental changes. A summary of these tunable factors and how they are chosen within the simulation setting under consideration is provided in Table \ref{parameterB}.

\subsection{Comparison to a Fixed "open loop" Power Fixed Backoff protocol }

The suggested algorithm is compared to a scenario, where access probabilities and target power are held fixed, 
while the ramping step for the transmission power is predifined and same for all efforts. The fixed scenario is in other words an "open-loop" control scheme, with predefined constant  $\left(p,\Delta p\right)$. The choice for the fixed backoff probability in the comparison scenario, equals $\left[b_1,b_2,b_3,b_4,b_5\right] = \left[0.5,0.4,0.3,0.2,0.1\right]$ and is such that the average occurance of an idle slot is less than $\mathcal{A} = 0.05$, hence the channel is kept busy with user efforts for access during most of the time . In this sense, the comparison between the adaptive-protocol suggested and a fixed protocol is more fair for a tunable factor of  $\mathcal{A} = 0.05$ or less. How the average idle probability changes between  $\mathcal{A} = \{0.05, 0.25, 0.5\}$ and the fixed case can be seen in Fig. \ref{idle}. We refer the reader to the Parameter Table \ref{parameterA} for the actual values used throughout these simulations. The above fixed scenario is denoted by (FPFB) for Fixed Power Fixed Backoff. Two types of protocols are used for performance comparison:

\begin{itemize}
\item \textbf{Fixed Power Dynamic Backoff (FPFB) protocols}. In this case the "open loop" power control of the protocol is the same as in the fixed scenario FPFB case. The backoff mechanism adapts to measurements as suggested in the protocol description of this work (Paragraph 4.3, Backoff Probability Problem).
\item \textbf{Dynamic Power Dynamic Backoff (DPDB) protocols}. In this case both backoff and power are adapted as the protocol suggests in Paragraph 4.3. The backoff comes from the solution of the drift minimization problem, while the target power $p$ is adapted according to the MIAD rule.
\end{itemize}

\subsection{Performance Evaluation: Lyapunov Function and Number of Efforts}

The performance of the scheme and its comparison to the fixed scenario FPFB is initially 
illustrated in the plots of the performance metric in Fig.\ref{PM} and the plots of 
the average number of access efforts until success in Fig.\ref{effort}. The two figures show a close relation to each other, due to the choice of the specific Lyapunov function $V$. Since $V$ was chosen as the sum of user efforts, lower values translate into better performance for the protocol. In all six curves, our protocol outperforms the FPFB scenario in the 
metric chosen as well as in the average number of user efforts. Furthermore, all DPDB cases schow improved performance compared to FPDB, given a certain value of the parameter $\mathcal{A}$. The higher the value of tunable factor $\mathcal{A}$, the better the performance and the less the averge efforts required up to packet reception.

\subsection{Performance Evaluation: Delay, Power Consumption and Dropping Rate}

The three most important performance measures in random access that can illustrate the improvements of the 
suggested protocol are the total delay suffered by a packet until success (including backoff slots), the total 
transmission power used until success as well as the percentage of users dropped because the maximum number $M$ of efforts is exceeded.
These are shown in Fig.\ref{delayFP}, \ref{delayDP}, \ref{PFP}, \ref{PDP} and \ref{DRFP}, \ref{DRDP} respectively, for (a) the FPDB case and (b) the DPDB case.

From the plots, it is illustrated how an increase of the parameter $\mathcal{A}$ influences positively power consumption and dropping rate at the cost of delay. Furthermore, the DPDB schemes perform better than the FPDB schemes in terms of delay and dropping, but have a cost in power consumption. Altogether, the performance of the protocol is tunable, to the requirements of 
the service provider. If the delay is not an issue, power can be considerably saved and the number of users dropped is reduced. As long as delay becomes an issue, transmission power can still be saved by using only the FPDB protocols. The dropping rate is also improved in such a case. Dynamic backoff generally allows the system to remain stable - in the sense that the rate of dropped users does not tend to "explode" - for a higher value of $N$. The behavior of this measure also improves for higher $\mathcal{A}$, which is reasonable since allowing a higher idle probability, distributes the transmissions of users among a larger number of time-slots.

A more detailed comparison of the schemes is given in the following figures. 
Specifically, Fig.\ref{DMRFP} and Fig.\ref{DMRDP} illustrate the beneficial use of the MIAD power control for the detection miss ratio, which leads to a drastic reduction of the average number of miss-detected signals in the system for DPDB protocols. Obviously the miss-detection curves for FPDB are similar to the FPFB case, since no power control is applied. Furthermore, considering the contention ratio CR, both Fig.\ref{CRFP} and Fig.\ref{CRDP} show benefits compared to the fixed FPFR case. Interestingly, the DPDB cases are slightly worse than the FPDB. This is because a higher number $N_d\left(t\right)$ is detected for the same window size $W$, so that the CR calculated as in (\ref{DMRa}) appears higher.

\subsection{Protocol adaptation to channel fluctuations and deep fades}

In the current subsection, we further illustrate the performance of our protocol - which operates with parameters given in Table \ref{parameterB} - for a scenario with fluctuations and abrupt changes of the fading conditions. Such investigation shows how fast and with which cost in power expenditure can the protocol adapt to environmental changes. Specifically, we use a factor $\beta$ to multiply the long-term fading of each user. Initially the factor has an expectation $1$ and its value fluctuates uniformly within the interval $\left[0.7, 1.3\right]$. After a certain time-interval 
we initiate a sudden deterioration of the channel to an average of $0.8$, which returns to $1$ after some time. The realization of such fading scenario for a given user is presented in Fig. \ref{fig:Fluc}.

Very important here is to show how the protocol performs over time and adapts to the changes. Compared to the fixed power scenario, our suggested protocol can react very fast to the changes by an increase in power consumption during the period of the deep fade, which keeps the DMR always within the defined interval $\left[DMR^L,DMR^H\right]$. This can be observed in Fig.\ref{Pow1} and Fig.\ref{PowDMR1}.

           %


\section{Conclusions}
\label{Section5}

We have suggested a dynamically adaptive protocol which updates the user access probabilities and transmission powers 
in cellular random access communications for LTE systems. The protocol is based on measurements and user reports at the base station 
side, which allow for an estimation of the number of users present within the cell, as well as the quantities of detection-miss and contention probability. The protocol updates take place per time slot in a myopic fashion. By solving a drift minimization problem for the contention level 
and using closed loop updates for the transmission power level by a MIAD rule, the BS coordinates 
the actions chosen by the users, by broadcasting the pair $\left(L^*\left(t\right),p^*\left(t\right)\right)$.

The algorithmic steps, together with the methodology of the drift minimization for a certain measure of interest
referring to the steady state, provide a general suggestion to treat problems of self-organization in wireless
networks. Considering the specific scheme, a large variation of algorithms can be extracted, 
by choosing e.g. some different state function for the performance measure, 
or by introducing other kinds of user reports, which may provide more information to the central receiver, at the 
cost of increase in signaling. Furthermore, a larger action set can definitely provide 
a higher performance, compared to the proposed one - which introduces two possible 
values for the contention level (high/low) and two actions for the power level (increase/decrease). 
Even in this scheme however, which is characterized by an ``economy'' of signaling and information 
exchange, the results - as illustrated by numerical examples - are extremely beneficial, especially as the user number in the cell increases.

\newpage







\section*{Appendix - Relation between the Drift Minimization and a Markov Decision Problem Solution}

We begin this section by considering an ideal setting, meaning that all expressions are known and the system is fully controllable by the choice of actions. Let $V\left(\mathbf{S}\left(t\right)\right)$ be a non-negative function of the system state 
and let $\mathcal{M}\left(V,\tilde{\mathbf{A}}\right)$ be a performance metric related to the steady state reached when $t\rightarrow\infty$, if the initial state is $\mathbf{S}\left(0\right)$. The metric is a function of the entire set of actions $\tilde{\mathbf{A}}$
\begin{eqnarray}
 \label{PerfMeas}
\mathcal{M}\left(V,\tilde{\mathbf{A}}\right) & := & \lim_{t\rightarrow\infty}\mathbb{E}\left[V\left(\mathbf{S}\left(t\right)\right)|\mathbf{S}\left(0\right)\right].
\end{eqnarray}
If the actions are chosen per time-slot $t$ from the set $\mathbf{A}\left(t\right)$, the following general MDP can be posed:
\begin{eqnarray}
 \label{ProblemM}
\begin{tabular}{l l}
\textbf{min} 					& $\mathcal{M}\left(V,\tilde{\mathbf{A}}\right)$\\
\textbf{s.t.}					& $\mathbf{A}\left(t\right)\in\mathbb{A}$, $t=0,1,\ldots$
\end{tabular}
\end{eqnarray}

\begin{Pro}
 \label{Pro3}
The MDP in (\ref{ProblemM}) can be solved using the dynamic programming tools. The optimal solution satisfies Bellman's equation \cite{Puterman}
\begin{eqnarray}
 \label{Bellman}
J\left(\mathbf{S}\right) & = & \min_{\mathbf{A}\in\mathbb{A}}\left\{D\left(V\left(\mathbf{S}\right),\mathbf{A}\right) + \sum_{\mathbf{S}\acute{}\in\mathcal{S}}p_{s\rightarrow s\acute{}}J\left(\mathbf{S}\acute{}\right)\right\},\ \forall \mathbf{S}\in \mathcal{S}
\end{eqnarray}
 for the cost-to-go function $J\left(\mathbf{S}\right)$, where $\mathbf{S}\acute{}$ is the possible state at the next time slot, 
while the transition probabilities $p_{s\rightarrow s\acute{}}$ are functions of the actions chosen. The solution is state-dependent, meaning that the 
optimal actions depend on the system state and not on time.
\end{Pro}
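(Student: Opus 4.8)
The plan is to reduce the infinite-horizon MDP~(\ref{ProblemM}), whose objective~(\ref{PerfMeas}) is a limiting expectation rather than a running cost, to a standard total-cost MDP whose per-stage cost is exactly the one-step drift, and then to invoke the classical dynamic-programming optimality equation.

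First I would exploit a telescoping identity. For any admissible policy $\tilde{\mathbf{A}}$ and any finite horizon $t$, the tower property together with the definition~(\ref{Drift}) of the drift gives
\[
\mathbb{E}[V(\mathbf{S}(t)) \mid \mathbf{S}(0)] = V(\mathbf{S}(0)) + \sum_{\tau=0}^{t-1}\mathbb{E}[D(V(\mathbf{S}(\tau)),\mathbf{A}(\tau)) \mid \mathbf{S}(0)].
\]
Letting $t\to\infty$ and using that the limit in~(\ref{PerfMeas}) exists,
\[
\mathcal{M}(V,\tilde{\mathbf{A}}) = V(\mathbf{S}(0)) + \sum_{\tau=0}^{\infty}\mathbb{E}[D(V(\mathbf{S}(\tau)),\mathbf{A}(\tau)) \mid \mathbf{S}(0)].
\]
Since $V(\mathbf{S}(0))$ is a constant that does not depend on the actions, minimizing $\mathcal{M}$ over $\tilde{\mathbf{A}}$ is equivalent to minimizing the infinite-horizon total expected cost $\sum_{\tau\geq 0}\mathbb{E}[D(V(\mathbf{S}(\tau)),\mathbf{A}(\tau)) \mid \mathbf{S}(0)]$ of the controlled Markov chain with transition kernel $p_{s\to s'}(\mathbf{A})$ from~(\ref{TrPm1})--(\ref{TrPMm2}) and instantaneous cost $D(V(\mathbf{S}),\mathbf{A})$. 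Thus $J(\mathbf{S})$ in the statement is precisely the optimal cost-to-go of this total-cost MDP started from $\mathbf{S}(0)=\mathbf{S}$.

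Next I would invoke the standard theory for such problems (e.g.~\cite{Puterman}): the state space $\mathcal{S}$ is finite ($M^N$ states), the action set $\mathbb{A}$ is compact, and for each pair $(\mathbf{S},\mathbf{S}')$ the maps $\mathbf{A}\mapsto p_{s\to s'}(\mathbf{A})$ and $\mathbf{A}\mapsto D(V(\mathbf{S}),\mathbf{A})$ are continuous in the ideal setting considered here. Under these conditions the infimum over all (history-dependent, randomized) policies is attained by a stationary deterministic policy, $J$ is finite, and it is characterized by the Bellman optimality equation
\[
J(\mathbf{S}) = \min_{\mathbf{A}\in\mathbb{A}}\Bigl\{ D(V(\mathbf{S}),\mathbf{A}) + \sum_{\mathbf{S}'\in\mathcal{S}} p_{s\to s'}\,J(\mathbf{S}')\Bigr\},\qquad \forall\,\mathbf{S}\in\mathcal{S},
\]
which is exactly~(\ref{Bellman}); the optimizing action depends only on $\mathbf{S}$ and not on $t$ because neither the cost nor the kernel is time-dependent.

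The main obstacle is the well-posedness of the limiting objective and of the induced total cost: the drift $D$ is not sign-definite, so a priori the series $\sum_\tau \mathbb{E}[D(\cdot)]$ need neither converge nor be monotone, and $\mathbb{E}[V(\mathbf{S}(t))]$ could oscillate rather than settle. I would handle this by restricting to stationary policies, for which $\mathbf{S}(t)$ is a finite Markov chain with a stochastic one-step transition matrix; by the Perron--Frobenius argument already used in Section~\ref{Section1} (after discarding transient states), $\mathbb{E}[V(\mathbf{S}(t))]$ converges to $\sum_{\mathbf{S}}\pi(\mathbf{S})V(\mathbf{S})$ with $\pi$ the limiting distribution, so the telescoped total drift cost converges; the standard optimality-of-stationary-policies result for finite total-cost MDPs then shows nothing is lost by this restriction and closes the loop with the displayed Bellman equation.
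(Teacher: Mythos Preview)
Your proposal is correct and follows essentially the same route as the paper: both first establish the telescoping identity $\mathcal{M}(V,\tilde{\mathbf{A}}) = V(\mathbf{S}(0)) + \sum_{\tau\geq 0}\mathbb{E}[D(V(\mathbf{S}(\tau)),\mathbf{A}(\tau))\mid \mathbf{S}(0)]$ (the paper isolates this as a separate lemma, proved exactly as you do via the tower and Markov properties), and then read off the Bellman recursion for the resulting total-cost problem. The only minor difference is that the paper derives~(\ref{Bellman}) by writing out the finite-horizon sum $\mathcal{M}_T$ explicitly, identifying the recursion for the cost-to-go at each stage, and letting $T\to\infty$, whereas you package this step as an appeal to standard finite-state total-cost MDP theory; you are also more careful than the paper about the well-posedness of the limit and the series of drifts.
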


\begin{Cor}
 \label{Pro4}
The solution of the drift minimization problem (\ref{ProblemG}) at each time slot $t$, is a suboptimal solution to the MDP in (\ref{ProblemM}). It is called one-stage look-ahead (myopic), in the sense that the actions are chosen per slot, considering only the transition to the next state and not the entire cost-to-go. 
\end{Cor}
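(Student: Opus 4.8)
The plan is to obtain the statement essentially by inspection of Bellman's equation (\ref{Bellman}) from Proposition \ref{Pro3}, together with the elementary fact that a greedy one-step rule is generically worse than the dynamic-programming optimum. First I would observe that the right-hand side of (\ref{Bellman}) is the minimization over $\mathbf{A}\in\mathbb{A}$ of the sum of two action-dependent quantities: the one-slot drift $D\left(V\left(\mathbf{S}\right),\mathbf{A}\right)$ and the expected cost-to-go $\sum_{\mathbf{S}\acute{}\in\mathcal{S}}p_{s\rightarrow s\acute{}}J\left(\mathbf{S}\acute{}\right)$, the latter depending on $\mathbf{A}$ through the transition probabilities. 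The drift minimization problem (\ref{ProblemG}) at state $\mathbf{S}\left(t\right)$ retains only the first term and discards the second. By the standard terminology of dynamic programming \cite{Puterman}, the policy it induces is therefore a one-stage look-ahead (myopic) policy for the MDP (\ref{ProblemM}): it commits to $\mathbf{A}\left(t\right)$ as though the process terminated at $t+1$ with terminal cost $V$, ignoring how $\mathbf{A}\left(t\right)$ reshapes the law of all future states.

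Next I would make the word \emph{suboptimal} precise. Using the tower property and the definition (\ref{Drift}) of the drift one has, for every finite horizon $T$, the telescoping identity $\mathbb{E}\left[V\left(\mathbf{S}\left(T\right)\right)|\mathbf{S}\left(0\right)\right]-V\left(\mathbf{S}\left(0\right)\right)=\sum_{t=0}^{T-1}\mathbb{E}\left[D\left(V\left(\mathbf{S}\left(t\right)\right),\mathbf{A}\left(t\right)\right)|\mathbf{S}\left(0\right)\right]$; letting $T\rightarrow\infty$ shows that the steady-state metric $\mathcal{M}\left(V,\tilde{\mathbf{A}}\right)$ of (\ref{PerfMeas}) equals $V\left(\mathbf{S}\left(0\right)\right)$ plus the total accumulated expected drift, which is exactly the total-cost criterion whose optimum $J$ solves (\ref{Bellman}). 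Hence any admissible action sequence — the myopic one included — is feasible for (\ref{ProblemM}) and attains a value $J^{\mathrm{my}}\left(\mathbf{S}\right)\geq J\left(\mathbf{S}\right)$ for all $\mathbf{S}\in\mathcal{S}$, because minimizing each summand $\mathbb{E}\left[D\left(V\left(\mathbf{S}\left(t\right)\right),\mathbf{A}\left(t\right)\right)\right]$ separately and greedily cannot outperform joint minimization over the whole sequence $\tilde{\mathbf{A}}$, and generically does strictly worse since the current action also governs the distribution of the states $\mathbf{S}\left(t+1\right),\mathbf{S}\left(t+2\right),\ldots$ at which the later drift terms are evaluated. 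I would note, for clarity, that equality $J^{\mathrm{my}}=J$ occurs only in degenerate cases (e.g. when the cost-to-go term is constant in $\mathbf{A}$, or a monotone-structure condition makes the greedy choice simultaneously minimize the second term). This yields the corollary.

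I do not anticipate a real difficulty: the result is the textbook observation that one-step look-ahead is a heuristic surrogate for a Bellman recursion. The only step that warrants a line of care is the passage from the infinite-horizon metric (\ref{PerfMeas}) to the per-slot drift, i.e. the telescoping identity and the existence of the limit $t\rightarrow\infty$ — both of which are already secured by the finiteness of $\mathcal{S}$ and the stochasticity of the transition matrix noted in Section \ref{Section1} — and being explicit about the (weak) sense in which ``suboptimal'' is meant.
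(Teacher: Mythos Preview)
Your proposal is correct. The paper does not give a separate proof of this corollary; it is stated as an immediate consequence of Proposition~\ref{Pro3}, to be read off the Bellman equation~(\ref{Bellman}) exactly as you do in your first paragraph --- drop the cost-to-go term $\sum_{\mathbf{S}\acute{}}p_{s\rightarrow s\acute{}}J(\mathbf{S}\acute{})$ and what remains is the drift minimization~(\ref{ProblemG}), hence a one-stage look-ahead rule. Your second paragraph, the telescoping identity linking $\mathcal{M}(V,\tilde{\mathbf{A}})$ to the accumulated expected drift, is precisely the content of the paper's Lemma~\ref{Pro2}, which the authors deploy in the proof of Proposition~\ref{Pro3} rather than here; invoking it again to justify ``suboptimal'' via $J^{\mathrm{my}}\geq J$ is sound but more than the paper itself supplies for the corollary.
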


\subsection*{Proof of \textbf{Proposition \ref{Pro3}}}
We first need the following lemma
\begin{Lem}
 \label{Pro2}
The performance measure can be written as an infinite sum of expected drifts over the discrete time axis, given the initial state $\mathbf{S}\left(0\right)$
\begin{eqnarray}
 \label{SumDrift}
\mathcal{M}\left(V,\tilde{\mathbf{A}}\right) = V\left(\mathbf{S}\left(0\right)\right) + \sum_{t=0}^{\infty}\mathbb{E}\left[D\left(V\left(\mathbf{S}\left(t\right)\right),\mathbf{A}\left(t\right)\right)|\mathbf{S}\left(0\right)\right].
\end{eqnarray}
\end{Lem}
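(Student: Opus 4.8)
The plan is to prove the identity by a finite-horizon telescoping argument followed by a passage to the limit.

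First I would fix an integer $T\geq 1$ and write the elementary telescoping identity
\begin{eqnarray}
\mathbb{E}\left[V\left(\mathbf{S}\left(T\right)\right)|\mathbf{S}\left(0\right)\right] & = & V\left(\mathbf{S}\left(0\right)\right) + \sum_{t=0}^{T-1}\left(\mathbb{E}\left[V\left(\mathbf{S}\left(t+1\right)\right)|\mathbf{S}\left(0\right)\right] - \mathbb{E}\left[V\left(\mathbf{S}\left(t\right)\right)|\mathbf{S}\left(0\right)\right]\right).\nonumber
\end{eqnarray}
Every term here is finite because $V$ is a non-negative function on the finite state space $\mathcal{S}$, hence bounded.

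The second step rewrites each summand as an expected drift. Since the chain $\mathbf{S}\left(\cdot\right)$ is Markov and the action $\mathbf{A}\left(t\right)$ is either a prescribed deterministic sequence or a function of $\mathbf{S}\left(t\right)$ (as in Proposition \ref{Pro3}), the one-step transition from slot $t$ to slot $t+1$ depends on the past only through $\mathbf{S}\left(t\right)$. Applying the tower property with respect to the history $\left(\mathbf{S}\left(0\right),\ldots,\mathbf{S}\left(t\right)\right)$ and then this Markov property gives
\begin{eqnarray}
\mathbb{E}\left[V\left(\mathbf{S}\left(t+1\right)\right)|\mathbf{S}\left(0\right)\right] - \mathbb{E}\left[V\left(\mathbf{S}\left(t\right)\right)|\mathbf{S}\left(0\right)\right] & = & \mathbb{E}\left[\,\mathbb{E}\left[V\left(\mathbf{S}\left(t+1\right)\right) - V\left(\mathbf{S}\left(t\right)\right)|\mathbf{S}\left(t\right)\right]\,|\,\mathbf{S}\left(0\right)\,\right]\nonumber\\
& = & \mathbb{E}\left[D\left(V\left(\mathbf{S}\left(t\right)\right),\mathbf{A}\left(t\right)\right)|\mathbf{S}\left(0\right)\right],\nonumber
\end{eqnarray}
the last line being the definition (\ref{Drift}) of the drift. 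Substituting back yields the finite-horizon version of (\ref{SumDrift}), with $T$ in place of $\infty$ and the running state $\mathbf{S}\left(T\right)$ in place of the steady state.

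Finally I would let $T\rightarrow\infty$. By definition (\ref{PerfMeas}) the left-hand side tends to $\mathcal{M}\left(V,\tilde{\mathbf{A}}\right)$, so the partial sums on the right converge and their limit equals $\mathcal{M}\left(V,\tilde{\mathbf{A}}\right) - V\left(\mathbf{S}\left(0\right)\right)$, which is precisely (\ref{SumDrift}). I expect the only delicate point to be this limit step rather than the algebra: one needs $\lim_{t\rightarrow\infty}\mathbb{E}\left[V\left(\mathbf{S}\left(t\right)\right)|\mathbf{S}\left(0\right)\right]$ to exist (it is only implicitly granted by the definition of $\mathcal{M}$; for a periodic chain a Ces\`aro/time-average version of the statement would be needed instead) and no integrability obstruction to arise. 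Both are harmless here because $\mathcal{S}$ is finite and $V\geq 0$ is bounded, so all conditional expectations in sight are finite and uniformly bounded, and convergence of the left-hand side transfers directly to convergence of the series on the right.
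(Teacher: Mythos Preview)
Your proposal is correct and follows essentially the same route as the paper: both arguments use the tower property together with the Markov property to identify each one-step increment $\mathbb{E}\left[V\left(\mathbf{S}\left(t+1\right)\right)|\mathbf{S}\left(0\right)\right]-\mathbb{E}\left[V\left(\mathbf{S}\left(t\right)\right)|\mathbf{S}\left(0\right)\right]$ with the expected drift, then iterate (telescope) back to $t=0$ and pass to the limit. Your write-up is in fact somewhat more careful than the paper's about the finiteness and the limit step, but the underlying idea is identical.
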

\begin{proof}: Let $\mathcal{F}^{(t)}:=\left\{\mathbf{S}\left(0\right),\ldots,\mathbf{S}\left(t\right)\right\}$ be the information over the 
system realizations up to slot $t$. Obviously $\mathcal{F}^{(0)}\subseteq \mathcal{F}^{(t)}$ (formally we call $\left\{\mathcal{F}^{(t)},\ t\geq 0\right\}$ a filtration and $\mathcal{F}^{(0)}$ is a sub-$\sigma$-algebra of $\mathcal{F}^{(t)}$) 
and the tower property for expectations \cite[p.88]{Williams} holds. Hence,
 \begin{eqnarray}
  \mathbb{E}\left[V\left(\mathbf{S}\left(t+1\right)\right)|\mathbf{S}\left(0\right)\right] 	& \stackrel{Tower}{=} 				& \mathbb{E}\left[\mathbb{E}\left[V\left(\mathbf{S}\left(t+1\right)\right)|\mathcal{F}^{(t)}\right]|\mathcal{F}^{(0)}\right]\nonumber\\
												& \stackrel{Markov}{=} 		& \mathbb{E}\left[\mathbb{E}\left[V\left(\mathbf{S}\left(t+1\right)\right)|\mathbf{S}{(t)}\right]|\mathbf{S}{(0)}\right]\nonumber\\
												& \stackrel{(\ref{Drift})}{=}				& \mathbb{E}\left[D\left(V\left(\mathbf{S}\left(t\right)\right),\mathbf{A}\left(t\right)\right)|\mathbf{S}\left(0\right)\right] + \mathbb{E}\left[V\left(\mathbf{S}\left(t\right)\right)|\mathbf{S}{(0)}\right]\nonumber
 \end{eqnarray}
and by repeating the process for $t,\ldots,0$ and taking the limits for $t\rightarrow \infty$ we reach the result.
\end{proof}

Now we can continue with the proof of the the Proposition. Consider the series in (\ref{SumDrift}) up to a finite horizon $T+1$ and denote the related sum by $\mathcal{M}_T\left(V,\tilde{\mathbf{A}}\right)$. Then the expected drift term for some $\tau\leq T$ equals 

\begin{eqnarray}
 \label{DriftT}
\mathbb{E}\left[D\left(V\left(\mathbf{S}\left(\tau\right)\right),\mathbf{A}\left(\tau\right)\right)|\mathbf{S}\left(0\right)\right] & = & \nonumber\\
\sum_{\mathbf{S}\left(1\right)}\ldots\sum_{\mathbf{S}\left(\tau\right)}p_{s_o\rightarrow s_1}\ldots p_{s_{\tau-1}\rightarrow s_{\tau}} D\left(V\left(\mathbf{S}\left(\tau\right)\right),\mathbf{A}\left(\tau\right)\right) & & \nonumber
\end{eqnarray}

It can be observed that $p_{s_{\tau-1}\rightarrow s_{\tau}}$, which can be controlled by the actions $\mathbf{A}\left(\tau-1\right)$ appear in all summands of $\mathcal{M}_T\left(V,\tilde{\mathbf{A}}\right)$, for $\tau \leq \hat{t}\leq T$ and not for $0\leq t\leq \tau-1$. 
Following this observation, the optimal choice of actions $p_{s_{T}\rightarrow s_{T+1}}^*$ are found by solving $\min_{\mathbf{A}\left(T\right)\in\mathbb{A}}$ $\mathcal{M}_T\left(V,\tilde{\mathbf{A}}\right)$, the cost-to-go at $T$. 

The cost-to-go can be verified to satisfy the recursion, $\forall \mathbf{S}\left(\tau-1\right)\in\mathcal{S}$:
\begin{eqnarray}
 \label{CostToGoT}
J\left(\mathbf{S}\left(\tau-1\right)\right) & = & \min_{\mathbf{A}\left(\tau-1\right)\in\mathbb{A}}\sum_{\mathbf{S}\left(\tau\right)}p_{s_{\tau-1}\rightarrow s_{\tau}} \left(V\left(\mathbf{S}\left(\tau\right)\right)-V\left(\mathbf{S}\left(\tau-1\right)\right)+J\left(\mathbf{S}\left(\tau\right)\right)\right).\nonumber
\end{eqnarray}
The expression holds as well, when we let the horizon $T\rightarrow\infty$. Thus taking $\tau\rightarrow \infty$ results in (\ref{Bellman}).

\newpage

\section*{TABLES}

\begin{table}[h]
 \label{Algo1}
\centering
\caption{GENERAL SELF-OPTIMIZATION ALGORITHM}
\begin{tabular}[c]{l l l}
\hline
STEP 1 & \vline & Gather empirical information $\mathcal{I}$ at the BS.\\
STEP 2 & \vline & Estimate unknown factors (see 1. - 3. above).\\
STEP 3 & \vline & Solve the resulting optimization problem in (\ref{ProblemG}).\\
STEP 4 & \vline & Broadcast action-related information $\mathcal{J}$.\\
STEP 5 & \vline & Calculate at the user side the required actions, based on $\mathcal{J}$.\\
\hline
\end{tabular}
\label{RACHalgo}
\end{table}

\begin{table}[th]
      \centering
      \caption{PARAMETER TABLE}
      \begin{tabular}[c]{l l c}
      \hline
      Parameters & \vline & Value\\
      \hline
      Wireless Network & \vline & Single cell\\
      User distribution & \vline & Uniform within cell \\
      Number of users in cell & \vline & $\left\{1,2,\ldots,14\right\}$ \\
      Sequence pool size & \vline & $10$\\
      Fixed Tx Power & \vline & $250$ mW\\
      Power ramping step $\Delta p$ & \vline & $20$ mW\\
      Maximum Tx Power & \vline & $500$ mW\\
      Path loss $PL$ & \vline & $128.1+37.6\log(D \ km)$ dB\\
      Noise & \vline & $-133.2$ dBm\\
      SNR threshold & \vline & $8$ dB \\
      Maximum effort $M$ & \vline & $5$ \\
      Fixed backoff probability & \vline & $[0.5, 0.4, 0.3, 0.2, 0.1]$ \\     
      Number of slots & \vline & $15000$ slots\\
       \hline
       \end{tabular}
       \label{parameterA}
\end{table}

\begin{table}[th]
      \centering
      \caption{TUNABLE FACTORS TABLE}
      \begin{tabular}[c]{llc}
      \hline
      Tunable Factors & \vline & Value\\
      \hline
      Window length $W$ & \vline & $200$ slots\\ 
      Backoff factor $A$ & \vline & $\left\{0.05, 0.25,0.5\right\}$\\
      Access Function $f\left(m\right)$ & \vline & $m^{-1}$\\
      Power control factor $\delta_1$ & \vline & $2\times 10^{-4}$\\
      Power control factor $\delta_2$ & \vline & $8$ mW\\
       $DMR^H$ & \vline & $3.5\%$ \\
       $DMR^L$ & \vline & $2.5\%$ \\
       \hline
       \end{tabular}
       \label{parameterB}
\end{table}

\newpage

\section*{FIGURES}

\begin{figure}[ht]
\centering
           \includegraphics[width=0.75\textwidth]{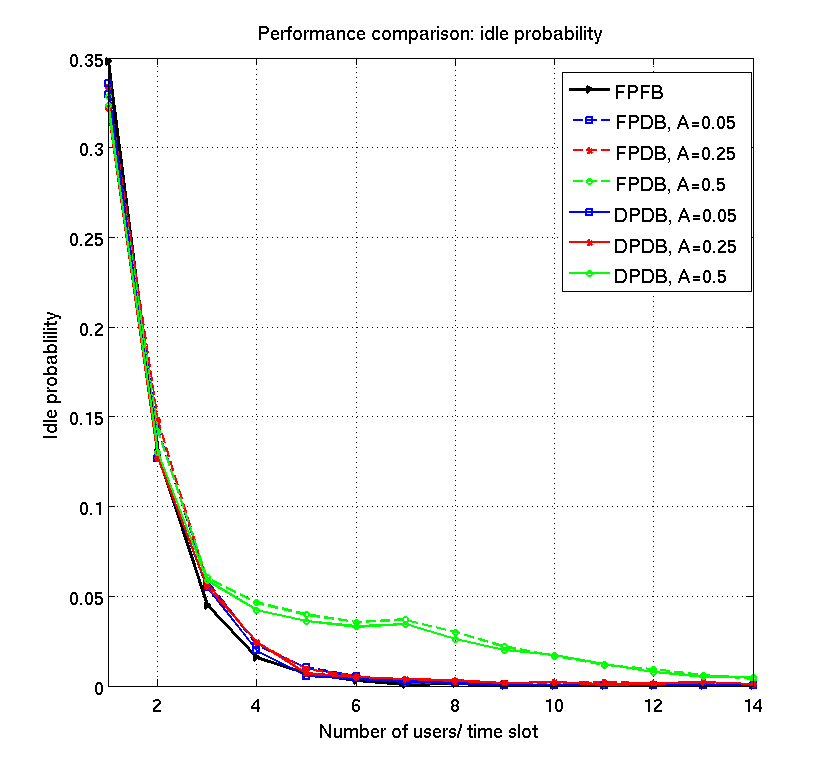}
           \caption{Comparison of the average occurence of idle slot per scheme. The dynamic scenario with $\mathcal{A}=0.05$ is the closest to follow the chosen fixed one.}
           \label{idle}
\end{figure}

\begin{figure}[ht]
\centering
           \includegraphics[width=0.75\textwidth]{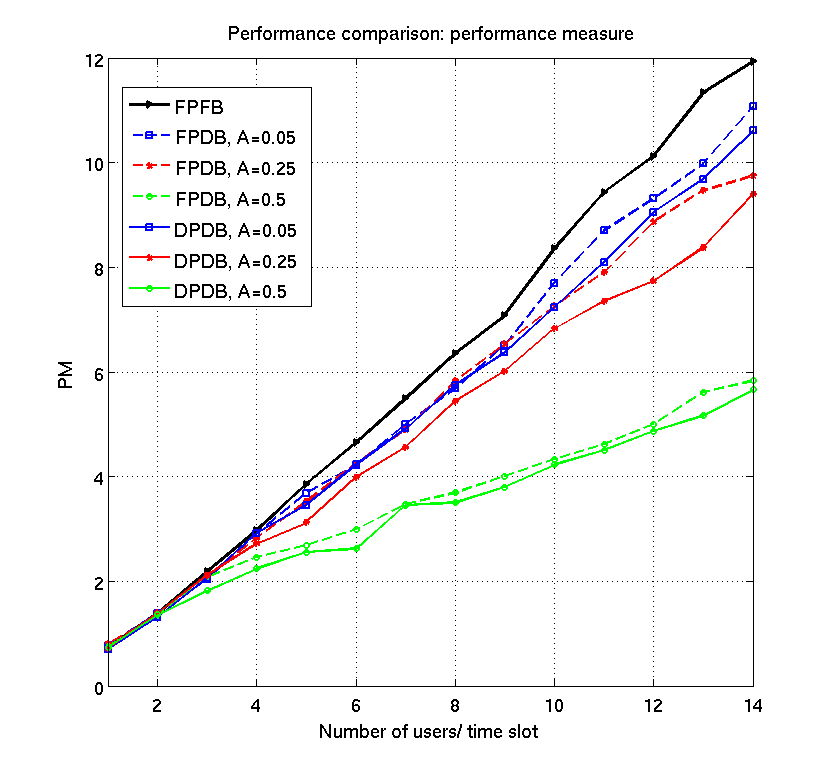}
           \caption{Comparison of performance measure, equal to the chosen function $V$ as $t\rightarrow \infty$. The measure improves with increasing 
idle probability bound $\mathcal{A}$. Furthermore, all DPDB schemes outperform the FPDB ones.}
           \label{PM}
\end{figure}

\begin{figure}[ht]
\centering
           \includegraphics[width=0.75\textwidth]{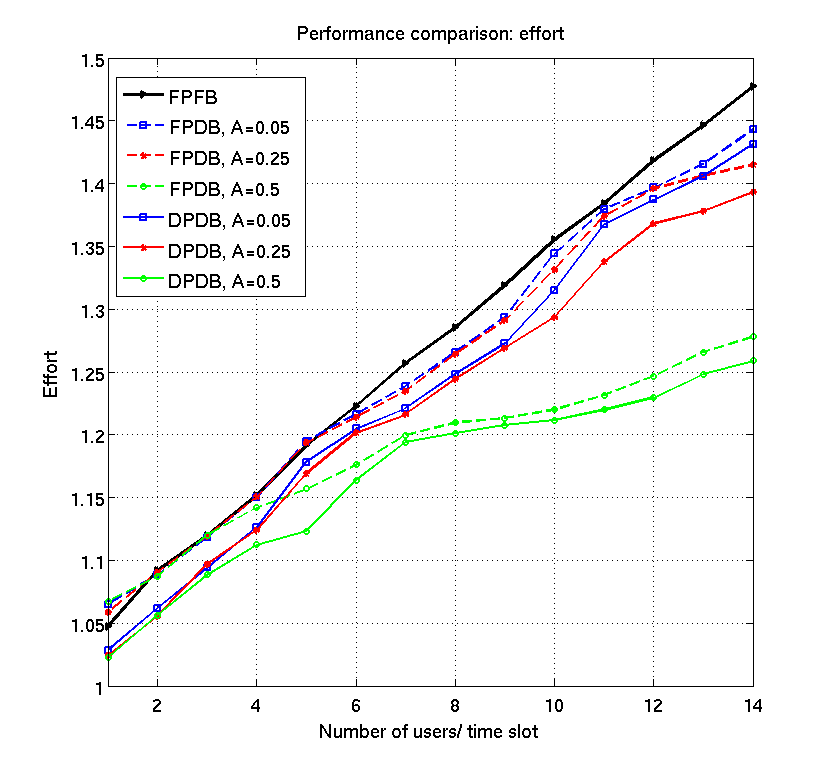}
           \caption{Comparison of the average number of efforts until success. The behaviour of these curves follows closely the performance metric curves, due to the specific choice of 
           the Lyapunov function $V$ as sum of user states.}
           \label{effort}
\end{figure}

\newpage

\begin{figure}[ht]
\centering
\subfigure[Total delay in FPDB protocols.]{
\includegraphics[width=0.45\textwidth]{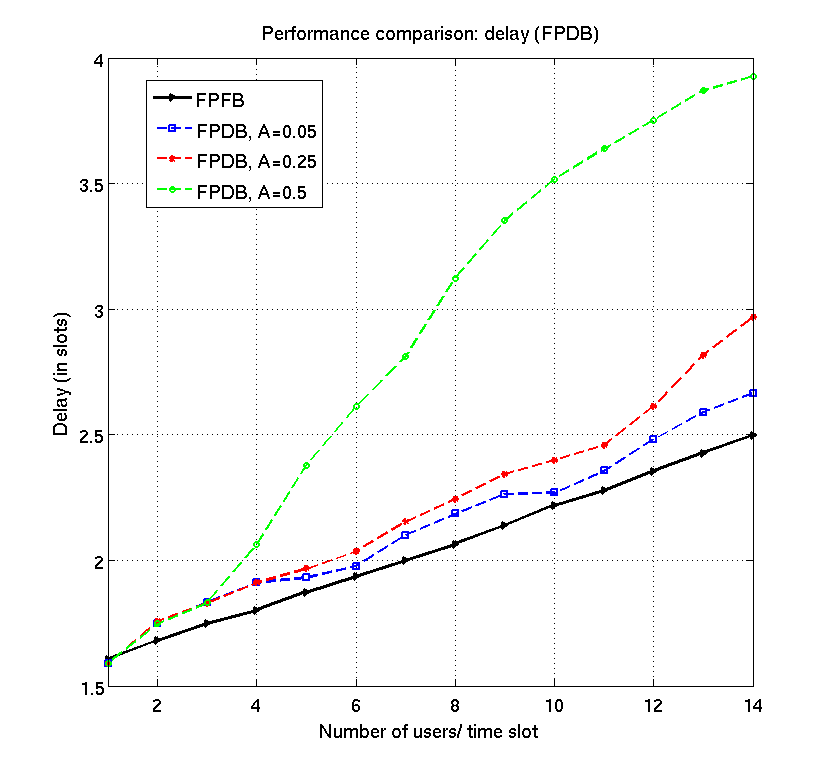}
\label{delayFP}
} 
\subfigure[Total delay in DPDB protocols.]{
\includegraphics[width=0.45\textwidth]{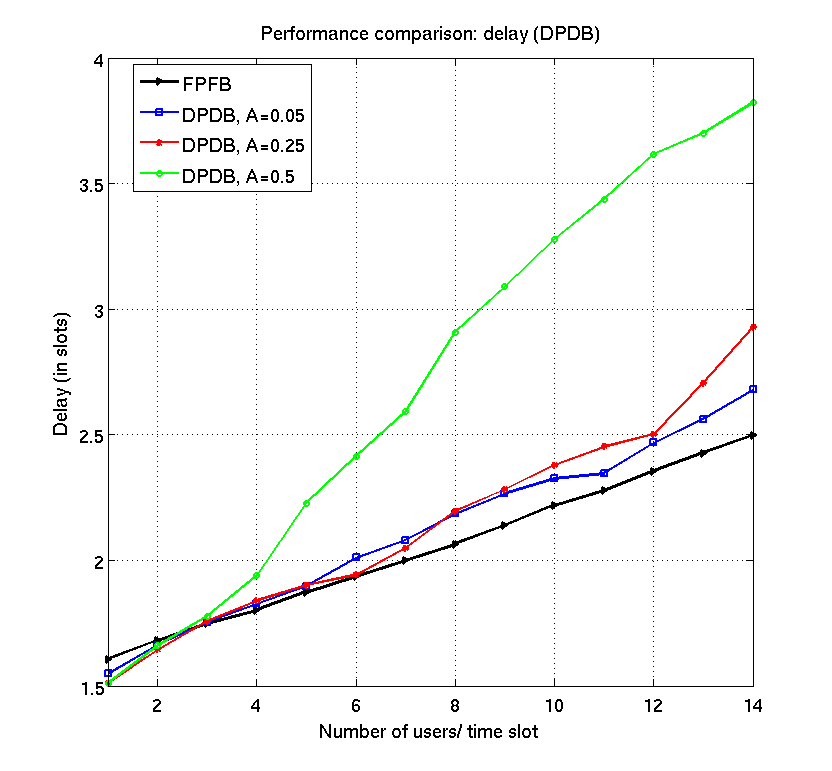}
\label{delayDP}
} 
\label{Two}
\caption{Evaluation of total average delay up to success (including backoff slots) in the case of (a) FPDB protocols and (b) DPDB protocols. The higher the 
parameter $\mathcal{A}$, the higher the allowed delay. For $\mathcal{A}=0.05$, the protocol delay approaches the one of the FPFB protocol. In general power control improves the delay.}
\end{figure}

\begin{figure}[ht]
\centering
\subfigure[Tx power in FPDB protocols.]{
\includegraphics[width=0.45\textwidth]{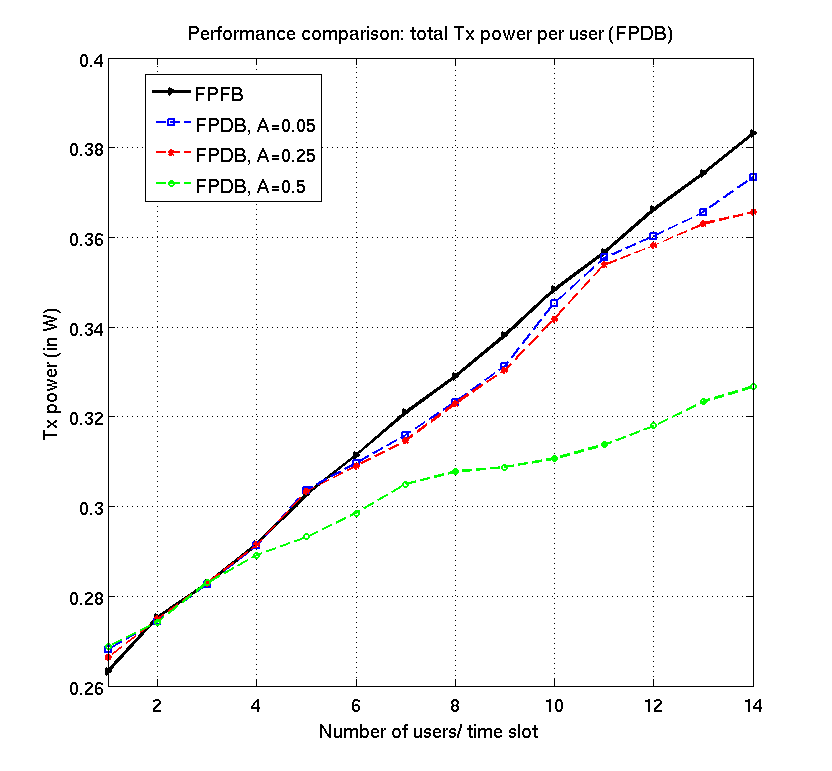}
\label{PFP}
} 
\subfigure[Tx power in DPDB protocols.]{
\includegraphics[width=0.45\textwidth]{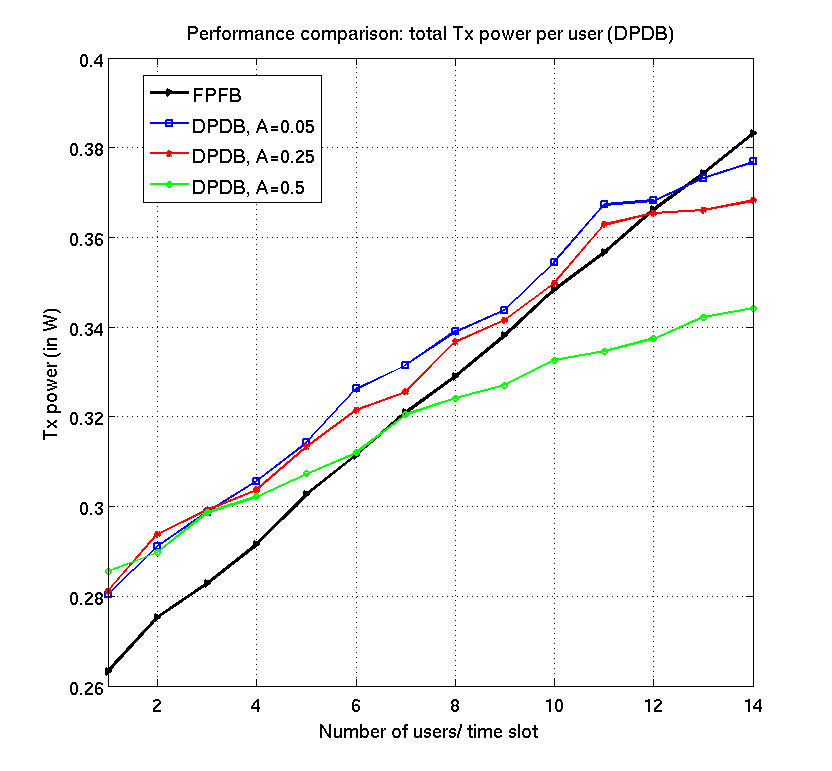}
\label{PDP}
} 
\label{Three}
\caption{Evaluation of average Tx Power consumption up to success in the case of (a) FPDB protocols and (b) DPDB protocols. In the 
case of FPDB, the consumed power is always lower than the FPFB case. Both cases exhibit benefits in Tx power.}
\end{figure}

\newpage

\begin{figure}[ht]
\centering
\subfigure[Dropping rate in FPDB protocols.]{
\includegraphics[width=0.45\textwidth]{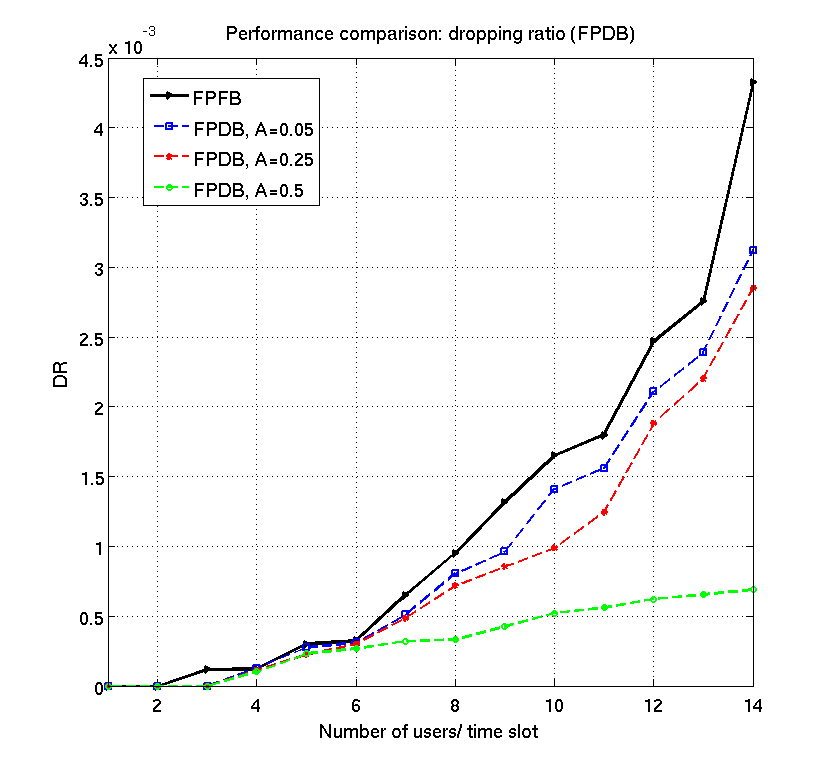}
\label{DRFP}
} 
\subfigure[Dropping Rate in DPDB protocols.]{
\includegraphics[width=0.45\textwidth]{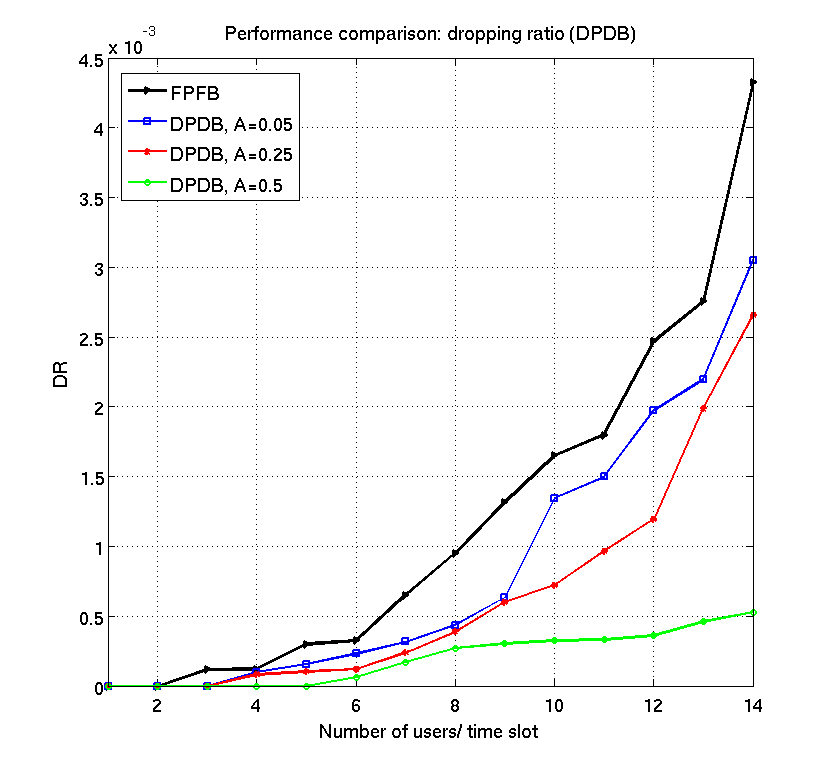}
\label{DRDP}
} 
\label{Four}
\caption{Comparison of the average dropping rate (DR) in the case of (a) FPDB protocols and (b) DPDB protocols.. The abrupt increase of the rate after a certain user number is an indicator
that the system is not anymore stable for a further increase in the cell user number. Higher values of $\mathcal{A}$ can increase the 
point when the instability appears, at the cost of delay. (For a single user, the dropping rate may be non-zero if the event of miss-detection occurs $M$ consecutive 
times due to bad channel conditions and poor transmission power.)}
\end{figure}

\newpage

\begin{figure}[ht]
\centering
\subfigure[Miss-detection rate in FPDB.]{
\includegraphics[width=0.45\textwidth]{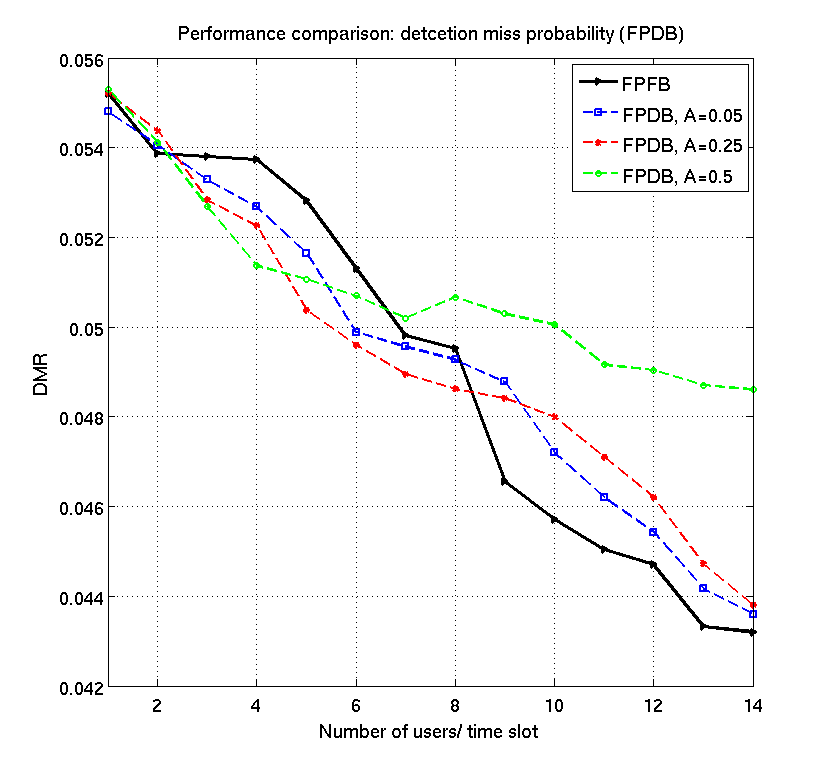}
\label{DMRFP}
} 
\subfigure[Miss-detection rate in DPDB]{
\includegraphics[width=0.45\textwidth]{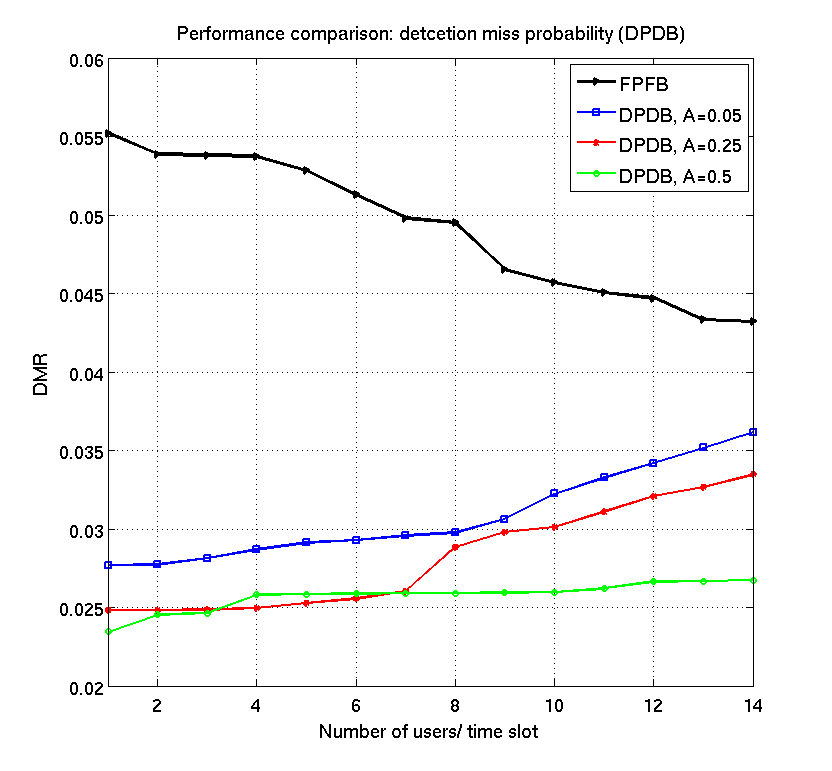}
\label{DMRDP}
} 
\label{Five}
\caption{Comparison of miss-detection rate DMR for the two protocols (a) FPDB and (b) DPDB. Benefits are evident only in the case (b) where the MIAD rule is applied.}
\end{figure}

\begin{figure}[ht]
\centering
\subfigure[Contention rate rate in FPDB.]{
\includegraphics[width=0.45\textwidth]{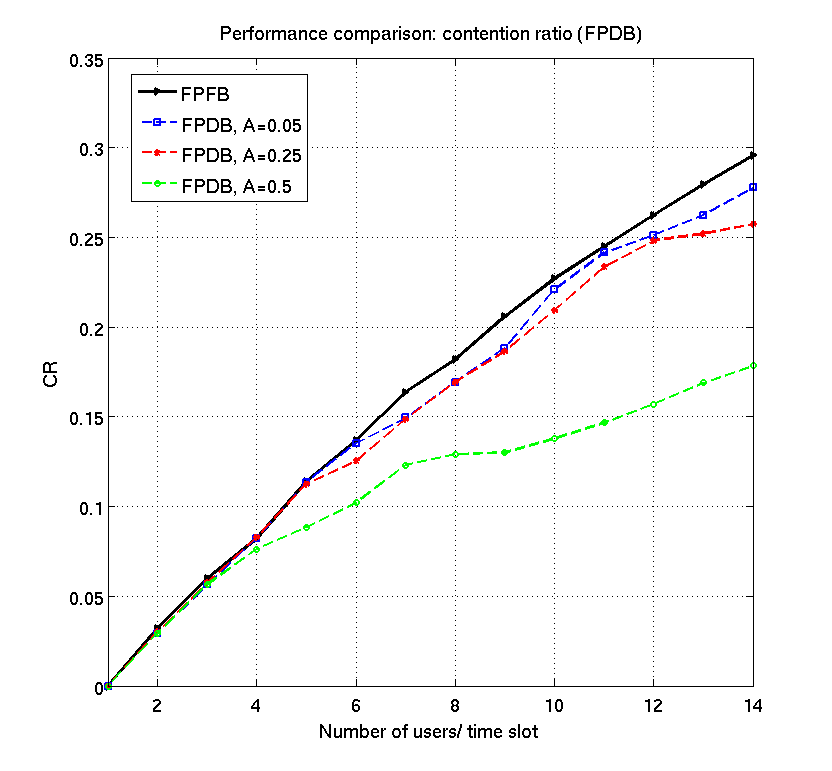}
\label{CRFP}
} 
\subfigure[Contention rate in DPDB]{
\includegraphics[width=0.45\textwidth]{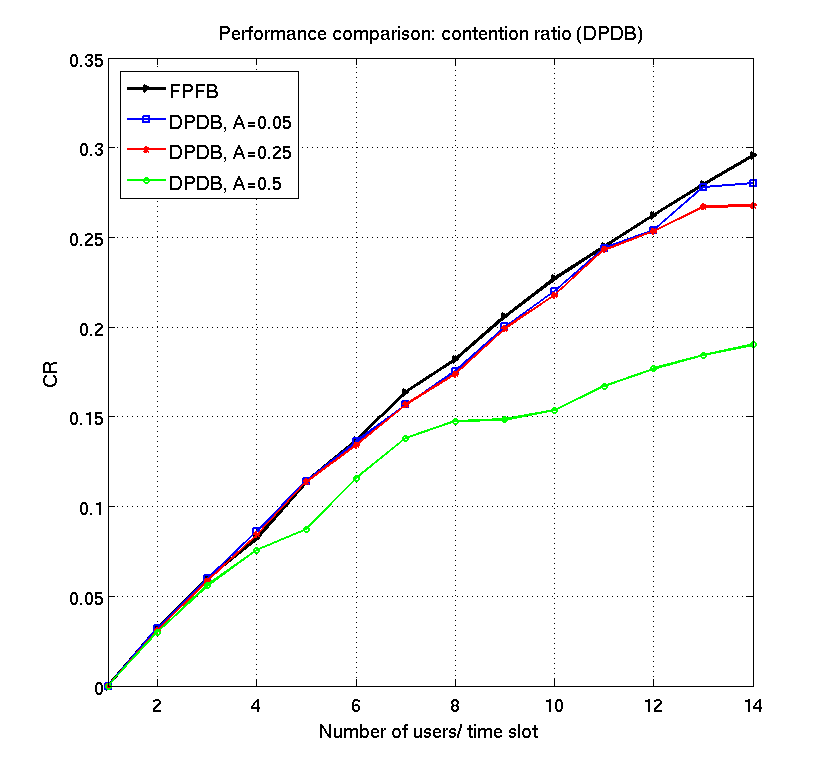}
\label{CRDP}
} 
\label{Six}
\caption{Comparison of contention rate CR for the two protocols (a) FPDB and (b) DPDB. Both schemes exhibit improvements compared to the FPFB case, due to 
the backoff optimal choices. The case DPDB is slightly worse than the FPDB due to the fact that a larger number of packets are detected, so that the CR appears lower.}
\end{figure}

\newpage

\begin{figure}[ht]    
\centering  
	 \subfigure[Scenario with channel fluctuations and deep fades.]{          
           \includegraphics[width=0.45\textwidth]{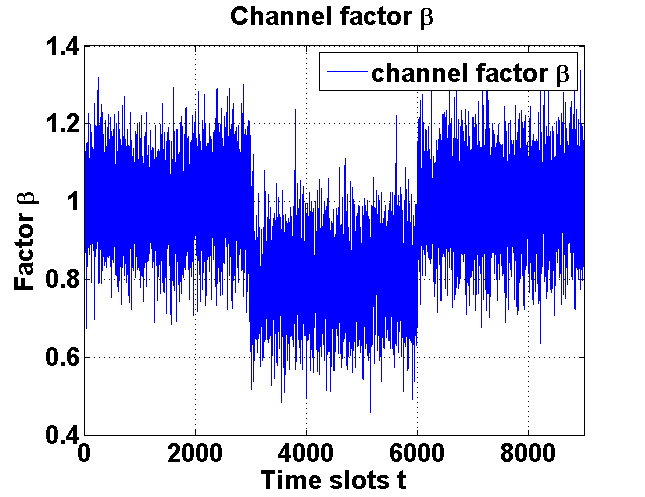}
           \label{fig:Fluc}
           }
            \subfigure[Temporal adaptation of transmission power to a deep fade.]{          
           \includegraphics[width=0.45\textwidth]{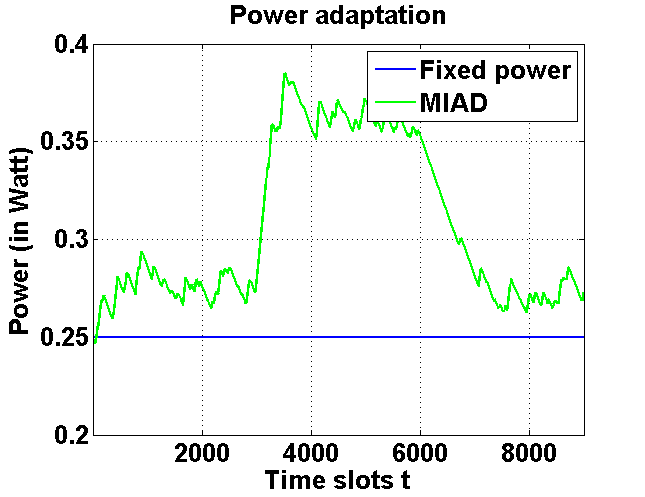}
           \label{Pow1}
           }
           \subfigure[Temporal variation of the DMR.]{          
           \includegraphics[width=0.45\textwidth]{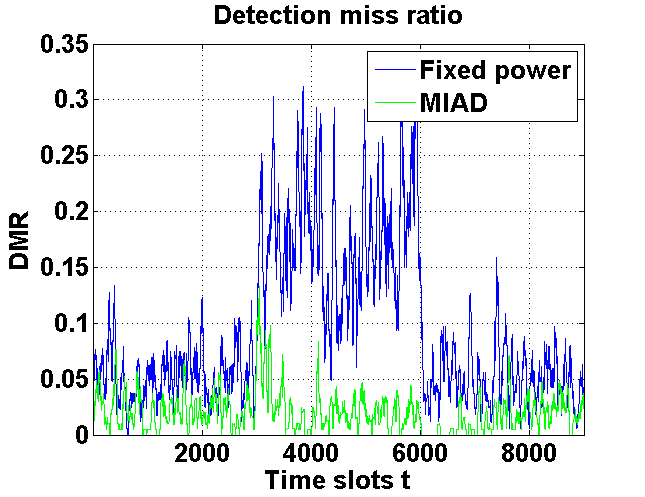}
           \label{PowDMR1}
           }
           \label{Temp}
           \caption{Protocol adaptation with respect to power and DMR}
\end{figure}

\end{document}